\newcommand{\blind}{0}
\newcommand{\possessivecite}[1]{\citeauthor{#1}'s (\citeyear{#1})}
\theoremstyle{definition}
\newtheorem{remark}{Remark}
\newtheorem{corollary}{Corollary}
\newtheorem{assm}{Assumption}
\theoremstyle{plain}
\newtheorem{theorem}{Theorem}
\newtheorem{proposition}{Proposition}
\newcommand{\Var}{\mathbf{Var}}
\newcommand{\E}{\mathbf{E}}
\newcommand{\X}{\mathcal{X}}
\newcommand\numberthis{\addtocounter{equation}{1}\tag{\theequation}}
\begin{document}

\def\spacingset#1{\renewcommand{\baselinestretch}%
{#1}\small\normalsize} \spacingset{1}


\if0\blind
{
  \title{\bf Efficient Multivariate Initial Sequence Estimators for MCMC}
  \author{Arka Banerjee\\
    Department of Mathematics and Statistics,\\ Indian Institute of Technology Kanpur\\ \texttt{arkabee20@iitk.ac.in}\\~\\
    Dootika Vats\thanks{\textit{Dootika Vats is supported by SERB (SPG/2021/001322).}}\hspace{.2cm} \\
    Department of Mathematics and Statistics,\\ Indian Institute of Technology Kanpur\\
    \texttt{dootika@iitk.ac.in}}
  \maketitle
} \fi

\if1\blind
{
  \bigskip
  \bigskip
  \bigskip
  \begin{center}
    {\LARGE\bf Title}
\end{center}
  \medskip
} \fi

\bigskip
\begin{abstract}
Estimating Monte Carlo error is critical to valid simulation results in Markov chain Monte Carlo (MCMC), and initial sequence estimators were one of the first methods introduced for this. Over the last few years, focus has been on multivariate assessment of simulation error, and many multivariate generalizations of univariate methods have been developed. The multivariate initial sequence estimator is known to exhibit superior finite-sample performance compared to its competitors. However, the multivariate initial sequence estimator can be prohibitively slow, limiting its widespread use. We provide an efficient alternative to the multivariate initial sequence estimator that inherits its asymptotic properties as well as the finite-sample superior performance. The effectiveness of the proposed estimator is shown via example implementations. Further, we also present univariate and multivariate initial sequence estimators for when parallel MCMC chains are run and demonstrate their effectiveness over a popular alternative.
\end{abstract}

\noindent%
{\it Keywords:} Autocovariances, covariance matrix estimation, central limit theorem, parallel Markov chains
\vfill

\newpage
\spacingset{1.75} 
\section{Introduction}

Variance assessment for Markov chain Monte Carlo (MCMC) estimators is critical to accurate Bayesian inference. In MCMC, given a target distribution $\pi$ defined on the probability space $(\X, \mathcal{B}(\X))$, an ergodic Markov chain $\{X_t\}_{t\geq 1}$ is constructed such that $\pi$ is its stationary distribution. When interest is in $\theta = \E_{\pi}(g(X))$ for a function $g: \X \to \mathbb{R}^d$, Birkhoff's ergodic theorem allows for the following Monte Carlo estimator to be strongly consistent:
\[
\bar{g}_n := \dfrac{1}{n}  \sum_{t=1}^{n} g(X_t)\,.
\]
Over the years, many output analysis tools have focused on assessing the Monte Carlo error of $\bar{g}_n$. This can be achieved if $g$ and the Markov chain exhibit a central limit theorem (CLT) such that there exists a $d \times d$ positive-definite covariance matrix $\Sigma$ so that as $n \to \infty$,
\begin{equation}
    \sqrt{n}\left(\bar{g}_n - \theta \right) \overset{d}{\rightarrow} \text{N}_d(0, \Sigma). \label{eq:clt}
\end{equation}
Estimating $\Sigma$ is critical for output analysis as it represents the variability in the estimation of $\theta$. Let $\zeta_{i}$ denote the lag-$i$ auto-covariance matrix
\begin{equation*}
    \zeta_{i} = \E_{\pi} \left[\left(g(X_{1}) - \theta \right) \left(g(X_{i+1}) - \theta \right)^{\top} \right]\,.
\end{equation*}
Then, when a CLT holds for $\bar{g}_n$, $\Sigma = \sum_{i=-\infty}^{\infty} \zeta_i$.  It is naturally challenging to estimate $\Sigma$ and many efforts have been made in this direction, both for univariate $(d = 1)$ and multivariate problems. \cite{seila1982multivariate}  proposed regeneration-based estimators that exhibit good properties but are challenging to employ in practice and require studying individual Markov chain kernels. As a consequence, estimators that can be implemented without knowledge of the kernel are more favored. These include batch-means estimators, spectral variance estimators, and initial sequence estimators (ISE). Amongst these, batch-means \citep{chen1987multivariate,jones2006fixed,vats2019multivariate} are computationally efficient but can exhibit significant undesirable underestimation \citep{vats2022lugsail}. Spectral variance estimators \citep{andrews1991heteroskedasticity,flegal2010batch,vats2018strong} can control this underestimation to some extent but are more computationally involved. 

As pointed out in \cite{vats2022lugsail}, in the presence of positive auto-correlation in the Markov chain, underestimation in $\Sigma$ can greatly impact the quality of inference. Specifically, underestimating the asymptotic variance leads to inflated effective sample size estimates, early termination of the MCMC, and a false sense of security about the quality of estimation. 

For reversible Markov chains, univariate ISEs, first proposed by \cite{geyer1992practical}, are known to be asymptotically conservative, exhibiting reduced underestimation in finite samples. ISEs are formed by truncating the sum of the sample auto-covariances, where the truncation point is process-dependent, and is sequentially determined. For the one-dimensional case, all sample auto-covariances can be obtained quickly using a fast Fourier transform (FFT) with a cost of $\mathcal{O}(n\log n)$, alleviating the burden of sequentially calculating sample auto-covariances. \possessivecite{geyer1992practical} ISE has over the years been studied and employed in many applications.  \cite{fabreti2022convergence} demonstrated its superior performance for analyzing the quality of phylogenetic MCMC, and \cite{ardia2017nse,carpenter2017stan,geyer2015package} employ the ISE estimator as default in their MCMC software.  

In most MCMC applications, $g(\cdot)$ is multivariate and marginal assessment of Monte Carlo error is insufficient to capture the complex inter-dependence across the components of $\bar{g}_n$. Similar to batch-means and spectral variance estimators, multivariate extensions of \possessivecite{geyer1992practical} ISE have been proposed by \cite{dai2017multivariate} and \cite{kosorok2000monte}. However, element-wise univariate FFTs add significant memory burden without considerable time gains and require repeated determinant calculations to obtain the truncation point. The resulting estimator is prohibitively slow.

We propose a new multivariate version of the ISE estimator which combines the finite-sample performance of the ISE with the computational efficiency of batch-means estimators while retaining the asymptotic theoretical properties of the multivariate estimator of \cite{dai2017multivariate}.  Section~\ref{sec:new_ISE} elaborates on the proposed estimator, its computational cost, and its theoretical properties. Running parallel Markov chains is commonplace in MCMC. However, univariate or multivariate ISEs have not been developed for this framework. In Section~\ref{sec:parallel_ise} we present a parallel-chain version of \possessivecite{geyer1992practical} ISE and then adapt it for a computationally efficient multivariate estimator. We show that the parallel-chain version of the ISE estimator retains the asymptotic properties of the single-chain estimator. Our examples in  Section~\ref{sec:examples} highlight the improved estimation quality and the drastically reduced computation time for our method compared to that of \cite{dai2017multivariate}. We first compare the performance of our proposed estimator to the existing ones for a toy problem where the true value of $\Sigma$ is known. Our second example is that of an MCMC employed in a Bayesian Poisson regression model analyzing spike train data\footnote{Code is available at \href{https://github.com/Arkagit/Efficient-Initial-Sequence-estimator}{https://github.com/Arkagit/Efficient-Initial-Sequence-estimator}}. In both our implementations it is evident that our proposed estimator, in addition to being significantly faster than current multivariate ISE, exhibits improved finite-sample properties.  

\section{Initial Sequence Estimators}
\label{sec:ISE}
    \subsection{Univariate Initial Sequence Estimators}
    \label{sec:defn and back}
        We present the original positive ISE of \cite{geyer1992practical} for the univariate case. For $d = 1$, if $\gamma_i$ denotes the lag-$i$ auto-covariance,
        \begin{equation*}
            \gamma_{i} = \text{Cov}_{\pi}(g(X_1), g(X_{i+1}))\,,
        \end{equation*}
        then the asymptotic variance in the Markov chain CLT in \eqref{eq:clt} can be expressed as
        \begin{equation*}
            \sigma^2 = -\gamma_{0} + 2 \sum_{i=0}^{\infty} \left( \gamma_{2i} + \gamma_{2i + 1} \right)\,.
        \end{equation*}
        For $k = 0, \dots, n-1$, denote the sample auto-covariance as
        \begin{equation*}
            \hat{\gamma}_{i;n} := \dfrac{1}{n} \sum_{t=1}^{n-i} (g(X_t) - \bar{g}_n) (g(X_{t+i}) - \bar{g}_n)\,.
        \end{equation*}
        Ideally, we would like to estimate $\sigma^2$ using the sum of all $\hat{\gamma}_{i;n}$'s, however, this leads to an inconsistent estimator. Define $\Gamma_{i} := (\gamma_{2i} + \gamma_{2i+1})$ for all $i \ge 0$ and let $\hat{\Gamma}_{i;n} = (\hat{\gamma}_{2i;n} + \hat{\gamma}_{2i+1;n})$. For reversible Markov chains, \cite{geyer1992practical} showed that $\Gamma_{i} > 0$ for all $i$. This result is then used to truncate the sum of the $\hat{\gamma}_{i;n}$, yielding the ISE. That is,
        \begin{align*}
            \hat{\sigma}^{2}_{\text{ISE}} 
            = - \hat{\gamma}_{0;n} + 2 \sum_{i=0}^{k_{n}} \hat{\Gamma}_{i;n}\,,
        \end{align*}
        where $k_{n}$ is the largest integer such that $\hat{\Gamma}_{i;n} > 0$ for all $i \in \{1, 2, \ldots, k_{n}\}$. That is, the sum of sample auto-covariances is stopped as soon as there is evidence that the auto-covariance for that lag is not well estimated. 
        \begin{theorem}{\citep[][Theorem~3.2]{geyer1992practical}}
        \label{thm:geyer_ise}
            For a $\pi$-reversible and Harris ergodic stationary Markov chain such that a univariate CLT holds for $g$, 
            \begin{equation*}
                \liminf_{n \rightarrow \infty} \hat{\sigma}^{2}_{\normalfont{ISE}} \ge \sigma^{2} \qquad{\text{with probability $1$}}\,.
            \end{equation*}
        \end{theorem}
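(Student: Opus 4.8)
The plan is to reduce the statement to two ingredients --- pointwise strong consistency of the sample autocovariances, and Geyer's structural facts about the sequence $\{\Gamma_i\}$ for reversible chains --- and then to control the data-dependent truncation index $k_n$ by a sandwiching argument, extracting only a lower bound on the $\liminf$. For the first ingredient, I would apply the strong law for Harris ergodic Markov chains (so that the initial distribution is irrelevant for almost-sure limits, or, equivalently, assume stationarity since the conclusion is an a.s. statement) to the functional $(g(X_t)-\theta)(g(X_{t+i})-\theta)$, together with $\bar g_n \to \theta$ a.s., to conclude that $\hat{\gamma}_{i;n}\to\gamma_i$ a.s. for each fixed lag $i$, hence $\hat{\Gamma}_{i;n}\to\Gamma_i$ a.s. for each fixed $i\ge 0$. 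Intersecting these countably many almost-sure events yields a single event of probability one on which $\hat{\Gamma}_{i;n}\to\Gamma_i$ for every $i$ simultaneously; work on this event from now on.

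For the second ingredient I would invoke the reversibility structure used by \cite{geyer1992practical}: the spectral representation $\gamma_i=\int_{-1}^{1}\lambda^{i}\,F(d\lambda)$ for a finite positive measure $F$ gives $\Gamma_i=\int_{-1}^{1}\lambda^{2i}(1+\lambda)\,F(d\lambda)\ge 0$, strictly positive under the (non-degenerate) setting where the CLT holds with $\sigma^2>0$; moreover $\Gamma_i$ is non-increasing in $i$, and $\sum_{i=0}^{\infty}\Gamma_i=(\sigma^2+\gamma_0)/2<\infty$ since the CLT holds. The point is that the population sequence strictly satisfies the inequalities defining the truncation rule for $k_n$, and that every retained $\hat{\Gamma}_{i;n}$ is nonnegative by construction of the estimator.

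With these in hand, fix $K\in\mathbb{N}$. By the strict positivity of $\Gamma_1,\dots,\Gamma_K$ and the convergence of the finitely many $\hat{\Gamma}_{i;n}$, $i\le K$, the defining condition for $k_n$ must hold up to index $K$ once $n$ is large enough, so $k_n\ge K$ eventually; since $K$ is arbitrary, $k_n\to\infty$ a.s. Fixing $K$ again, for all large $n$ we have $k_n\ge K$, and since each retained term is nonnegative, dropping the tail of the sum only decreases the estimator:
\[
\hat{\sigma}^{2}_{\text{ISE}}=-\hat{\gamma}_{0;n}+2\sum_{i=0}^{k_n}\hat{\Gamma}_{i;n}\;\ge\;-\hat{\gamma}_{0;n}+2\sum_{i=0}^{K}\hat{\Gamma}_{i;n}.
\]
Letting $n\to\infty$ and using convergence of the finitely many terms on the right gives $\liminf_{n\to\infty}\hat{\sigma}^{2}_{\text{ISE}}\ge -\gamma_0+2\sum_{i=0}^{K}\Gamma_i$ on the almost-sure event. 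Finally, letting $K\to\infty$, the nonnegative partial sums increase to $-\gamma_0+2\sum_{i=0}^{\infty}\Gamma_i=\sigma^2$, which proves the claim.

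The main obstacle is precisely the handling of the random truncation point $k_n$: one cannot show that $k_n$ stabilizes, and in principle it may ``under-run'' infinitely often, so the argument can only deliver a lower bound --- this is why the conclusion is a $\liminf$ inequality rather than convergence, and why it is essential that the retained $\hat{\Gamma}_{i;n}$ are nonnegative (so premature truncation shrinks the estimator) and that the strict population positivity forces $k_n$ eventually past any prescribed $K$. A secondary, more routine, technical point is justifying the ergodic limit without assuming stationarity and assembling the countably many null sets into one.
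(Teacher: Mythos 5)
Your argument is correct and follows essentially the same route as the paper's own treatment (the paper cites Geyer for this theorem and reproduces exactly this argument in Appendix~C.3 for the globally-centered version): pointwise strong consistency of $\hat{\Gamma}_{i;n}$, strict positivity of the population $\Gamma_i$ under reversibility forcing the truncation index past any fixed level, nonnegativity of the retained terms so that truncation only lowers the estimator, and convergence of the partial sums $-\gamma_0+2\sum_{i\le K}\Gamma_i$ to $\sigma^2$ under the CLT. The only cosmetic difference is bookkeeping --- you take $n\to\infty$ then $K\to\infty$, while the paper runs an $\epsilon$, $m_\epsilon$, $n_\epsilon$ argument --- and your reading of the truncation rule as positivity of the retained $\hat{\Gamma}_{i;n}$ matches what the paper's proof actually uses.
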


        Theorem~\ref{thm:geyer_ise} establishes asymptotic conservativeness of the $\hat{\sigma}^{2}_{\text{ISE}}$, and numerical experiments through the years have also established good finite-sample properties. Particularly for slowly mixing Markov chains for which truncation time $k_n$ can be large,  implementation of the ISE is greatly benefited by an FFT which allows for a calculation of all $\hat{\gamma}_{i:n}, i = 0, 1, 2, \dots, n-1$ in $\mathcal{O}(n \log n)$ time. On the flip side, the strong consistency of the ISE has not been shown yet and remains an open question. \cite{berg2023efficient} recently proposed a shape-constrained estimator of the sequence of $\gamma_{i}$, which they then used to obtain a moment least squares estimator (MLS) of $\sigma^2$ and obtained conditions for strong consistency. The computational complexity of the estimator is unclear, but in our simulations, it is evident that the MLS estimator requires more time than $\hat{\sigma}^{2}_{\text{ISE}}$.

    \subsection{Multivariate Initial Sequence Estimators}

        MCMC samplers are usually employed when $\pi$ is multidimensional and exhibits complex dependence across components. Consequently, multivariate error assessment in $\bar{g}_n$ is critical to MCMC output analysis. \cite{kosorok2000monte} first extended \possessivecite{geyer1992practical} univariate ISE to a multivariate case, and an improved alternative was presented by \cite{dai2017multivariate}. Although built in the same spirit as the univariate ISE, the practical limitations of the multivariate initial sequence estimators (mISE) are significant.  Let $\hat{\zeta}_{i;n}$ denote the lag-$i$ sample auto-covariance matrix for $i = 0, 1, \ldots, (n-1)$
        \begin{equation*}
            \hat{\zeta}_{i;n} = \hat{\zeta}_{-i;n}^{\top} = \frac{1}{n} \sum_{t = 1}^{n-i} (g(X_{t}) - \bar{g}_n)(g(X_{t+i}) - \bar{g}_n)^{\top}.
        \end{equation*}
        Let $\hat{Z}_{i;n} = (\hat{\zeta}_{2i;n} + \hat{\zeta}_{- 2i;n}^{\top})/{2} + (\hat{\zeta}_{2i+1;n} + \hat{\zeta}_{- (2i + 1);n})/{2}$ and for $0 \leq m \leq \lfloor n/2 -1\rfloor$, define
        \begin{equation*}
            \hat{\Sigma}_{m;n} = - \hat{\zeta}_{0;n} + 2 \sum_{i=0}^{m} \hat{Z}_{i;n}.
        \end{equation*}
        Here, $\hat{\Sigma}_{m;n}$ is the sum of sample auto-covariances up to lag $2m$. \cite{dai2017multivariate} showed that the population quantity $Z_i = \zeta_{2i} + \zeta_{2i+1}$ is positive-definite for all $i$ and further $Z_{i} - Z_{i-1}$ is also positive-definite for all $i \geq 1$. Using these properties, \cite{dai2017multivariate} define their mISE as
        \begin{equation*}
            \hat{\Sigma}_{\text{ISE}} := \hat{\Sigma}_{t_{n};n} \,,
        \end{equation*}
        where $t_{n}$ is the truncation time as described in Algorithm~\ref{algo:mise}.
  
    \begin{algorithm}
        \caption{mISE of \cite{dai2017multivariate}} \label{algo:mise}
        \KwData{$\{g(X_{1}), \ldots, g(X_{n})\}$}
        $\hat{\Sigma} \gets - \hat{\zeta}_{0;n}$; \Comment{Cost: $\mathcal{O}(d^{2}n)$}\\
        \For{$i = 0, 1, \ldots, \lfloor (n-1)/2 \rfloor$}{
            $\hat{\Sigma} = \hat{\Sigma} + 2\hat{Z}_{i;n}$;\\
            \If {\normalfont {$\hat{\Sigma}$ is positive-definite}} {
                $s_{n} = i$;
                \textbf{break};
            } \Comment{Cost: $\mathcal{O}(d^{2}n s_{n})$}
        }
        \For{$i = (s_{n} + 1), \ldots, \lfloor n/2-1 \rfloor$}{
            $S = \hat{\Sigma}$;\\
            Calculate $\hat{Z}_{i;n}$; \Comment{Cost: $\mathcal{O}(d^2n t_n)$}\\
            
            $\hat{\Sigma} = \hat{\Sigma} + 2\hat{Z}_{i;n}$;\\
            Calculate \text{det}($\hat{\Sigma}$) and \text{det}($\hat{S}$)  \Comment{Cost: $\mathcal{O}(d^3 t_n)$}\\
            \If {\normalfont {\text{det}($\hat{\Sigma}$) $\leq$ \text{det}($S$)} }{
                $t_{n} = (i-1)$;   \Comment{The truncation time} \\ 
                $\hat{\Sigma}_{\text{ISE}} \gets S$;
                \textbf{break};
            } 
        }
        \KwRet {\normalfont {$\hat{\Sigma}_{\text{ISE}}$}} \Comment{Total Cost: $\mathcal{O}(d^{2} n t_{n} + d^3t_n)$}\\ 
    \end{algorithm}

    Both the univariate ISE and mISE do not require any tuning. However, for slow mixing Markov chains, $t_n$ can be quite large, and further, as $n$ increases, the quality of estimation of $Z_i$ improves, increasing the truncation time $t_n$. As illustrated in Algorithm~\ref{algo:mise}, the overall cost of the $\hat{\Sigma}_{\text{ISE}}$ estimator is $\mathcal{O}(d^{2} n t_{n} + d^3t_n)$. The following proposition establishes that $t_n \to \infty$ as $n \to \infty$, yielding both $\mathcal{O}(d^2 n t_n)$ and $\mathcal{O}(d^3 t_n)$ terms to be significant.
    \begin{proposition}
        \label{thm:mise_tn}
        Let $t_{n}$ be the truncation time for the mISE in Algorithm~\ref{algo:mise}. Then with probability $1$
        \begin{equation}
            t_{n} \rightarrow \infty \quad \text{ as } n \rightarrow \infty\,.
        \end{equation}
    \end{proposition}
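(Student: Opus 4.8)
The plan is to reduce the statement to showing that, for each fixed integer $M \ge 1$, almost surely $t_n \ge M$ for all sufficiently large $n$; since $\{t_n \to \infty\}$ is the countable intersection over $M$ of these probability-one events, this suffices. The only probabilistic ingredient I would use is strong consistency of the fixed-lag sample autocovariances: applying the strong law for Harris ergodic chains to functions of the tuples $(X_t, X_{t+i})$, together with $\bar g_n \to \theta$ almost surely, gives $\hat\zeta_{i;n} \to \zeta_i$ a.s.\ for each fixed $i$, hence $\hat Z_{i;n} \to Z_i$ and $\hat\Sigma_{m;n} \to \Sigma_m := -\zeta_0 + 2\sum_{i=0}^{m} Z_i$ a.s.\ for each fixed $m$, and --- by continuity of the eigenvalues and of the determinant --- also $\lambda_{\min}(\hat\Sigma_{m;n}) \to \lambda_{\min}(\Sigma_m)$ and $\det(\hat\Sigma_{m;n}) \to \det(\Sigma_m)$ a.s.

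Next I would isolate the deterministic structure behind the algorithm. Put $m^{\ast} = \min\{m \ge 0 : \Sigma_m \succ 0\}$. Since $\Sigma_m - \Sigma_{m-1} = 2Z_m \succ 0$ by \cite{dai2017multivariate}, the matrices $\Sigma_m$ increase in the Loewner order and $\Sigma_m \to \Sigma \succ 0$, so $\lambda_{\min}(\Sigma_m) \to \lambda_{\min}(\Sigma) > 0$ and $m^{\ast} < \infty$. From this I would extract three elementary facts: (a) for every $i \ge m^{\ast}+1$, both $\Sigma_{i-1}$ and $\Sigma_i = \Sigma_{i-1} + 2Z_i$ are positive-definite, and since adding a positive-definite matrix strictly increases the determinant ($\det(\Sigma_i) = \det(\Sigma_{i-1})\det(I + \Sigma_{i-1}^{-1/2}(2Z_i)\Sigma_{i-1}^{-1/2}) > \det(\Sigma_{i-1})$), one has $\det(\Sigma_i) > \det(\Sigma_{i-1})$; (b) for every $m \le m^{\ast}-2$, $\Sigma_m$ is not positive-semidefinite --- otherwise $\Sigma_{m+1} = \Sigma_m + 2Z_{m+1}$ would already be positive-definite, contradicting minimality of $m^{\ast}$ --- hence $\lambda_{\min}(\Sigma_m) < 0$; and (c) if $\Sigma_{m^{\ast}-1}$ happens to be positive-semidefinite, then, not being positive-definite, it is singular, so $\det(\Sigma_{m^{\ast}-1}) = 0 < \det(\Sigma_{m^{\ast}})$.

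Finally I would fix $M > m^{\ast}$ and argue on the probability-one event where all the convergences above hold. By (b) and convergence of $\lambda_{\min}$, $\hat\Sigma_{m;n}$ fails to be positive-definite for each $m \le m^{\ast}-2$ once $n$ is large, while $\hat\Sigma_{m^{\ast};n} \succ 0$ for $n$ large; hence the first loop of Algorithm~\ref{algo:mise} terminates with $s_n \in \{m^{\ast}-1, m^{\ast}\}$ for all large $n$. In the second loop it is enough to check that the stopping test $\det(\hat\Sigma_{i;n}) \le \det(\hat\Sigma_{i-1;n})$ fails at every index $i$ with $s_n < i \le M$ once $n$ is large, because then the loop cannot terminate before index $M+1$ and so $t_n \ge M$. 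For $i \ge m^{\ast}+1$ this is immediate from (a) and $\det(\hat\Sigma_{\cdot;n}) \to \det(\Sigma_{\cdot})$. The only remaining index is $i = m^{\ast}$, which is tested only when $s_n = m^{\ast}-1$; here I would split on $\Sigma_{m^{\ast}-1}$: if it is not positive-semidefinite then, exactly as in (b), $\hat\Sigma_{m^{\ast}-1;n}$ is not positive-definite for large $n$, so in fact $s_n = m^{\ast}$ and the index $m^{\ast}$ never arises; and if it is positive-semidefinite then (c) together with $\det(\hat\Sigma_{m^{\ast}-1;n}) \to 0 < \det(\Sigma_{m^{\ast}})$ shows the test fails at $i = m^{\ast}$ as well. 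This gives $t_n \ge M$ eventually for every $M$, which is the claim.

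I expect the step around the first-loop index $s_n$ to be the main obstacle: the algorithm's determinant comparison is only well behaved once the running partial sum is genuinely positive-definite, so one has to rule out an \emph{accidentally early} positive-definite $\hat\Sigma_{m;n}$ for $m < m^{\ast}$ and treat the boundary index $m^{\ast}$ separately via the singularity in (c). Everything else is continuity together with the positive-definiteness of the $Z_i$ already supplied by \cite{dai2017multivariate}; in fact the finer property $Z_i - Z_{i-1} \succ 0$, needed elsewhere for the estimator to be well posed, plays no role here.
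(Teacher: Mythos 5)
Your proof is correct and follows essentially the same route as the paper's: almost-sure convergence of the partial-sum estimators $\hat{\Sigma}_{m;n}$ (and hence of their determinants) to $\Sigma_m$, combined with the fact from \citet{dai2017multivariate} that the population determinants strictly increase beyond the first positive-definite index, so that the determinant stopping test fails at every fixed lag for all large $n$ and therefore $t_n \to \infty$. The only difference is one of packaging: where the paper cites \citet[Proposition~2, Theorem~2, and Equation~(B5)]{dai2017multivariate} for the behaviour of $\vert \Sigma_m \vert$, of $s_n$, and of the determinant differences, you re-derive these facts from $Z_i \succ 0$ together with eigenvalue and determinant continuity, and your explicit treatment of the boundary index $i = m^{\ast}$ (the case $s_n = m^{\ast}-1$, where $\Sigma_{m^{\ast}-1}$ may be singular or indefinite) is if anything more careful than the paper's argument, which handles that point only implicitly through the cited results.
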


    \begin{proof}
        See Appendix~\ref{pf:mise_tn}.
    \end{proof}

    Proposition~\ref{thm:mise_tn} implies that for long-run MCMC samplers, the truncation time can be large. Although the rate of increase of $t_n$ is not known, in Section~\ref{sec:examples}, we empirically show that $t_n$ is polynomial in $n$, and thus dominates $\log(n)$. Further, Algorithm~\ref{algo:mise} requires repeated and sequential calculation of determinants and thus is not parallelizable. Consequently, the mISE of \cite{dai2017multivariate} can be prohibitively slow, impeding its widespread use.

    \begin{remark}\label{rem:mise_remark}
    In Appendix~\ref{app:fft} we present an alternative implementation of the mISE, one that is obtained by calling a univariate FFT for each element of the matrix. Although, the computational complexity of this implementation is $\mathcal{O}(d^2 n \log n + d^3 t_n)$, saving all the auto-covariance matrices adds a $\mathcal{O}(d^2n)$ burden on the memory, one that can be quite prohibitive, particularly in high-dimensions. Thus, we recommend the non-FFT implementation in Algorithm~\ref{algo:mise}.
    \end{remark}
 


\section{Covariance-Correlation ISE}
\label{sec:new_ISE}
The central idea for the proposed estimator is based on the observation that any covariance matrix can be decomposed into the product of a diagonal matrix of marginal standard deviations and its corresponding correlation matrix. That is, $\Sigma$ can be written as
\begin{equation}
    \Sigma = L \mathcal{R} L^{\top}\,, \label{eq:obs}
\end{equation}
where $\mathcal{R}$ is the correlation matrix corresponding to $\Sigma$ and $L$ is a diagonal matrix of marginal standard deviations. Let $r_{ij}$ denote the $(i, j)^{\text{th}}$ element of $\mathcal{R}$, and let $\sigma_{ij}$ denote the $(i, j)^{\text{th}}$ element of $\Sigma$. For the diagonals of $\Sigma$, we also adopt the notation $(\sigma^{(i)})^2 = \sigma_{ii}$. We will adopt different methods for estimating $L$ and $\mathcal{R}$. In order to do this, we first present the multivariate batch-means estimator of $\Sigma$. 

For a Monte Carlo sample size of $n = a_n b_n$, let $b_{n}$ be the batch-size and $a_{n}$ be the number of batches. For $k = 1, 2, \dots, a_n$, denote the mean vector in each batch by $\tilde{g}_{k} := b_n^{-1} \sum_{t = (k-1)b_{n} + 1}^{k b_{n}} g(X_{t})$.
    %
    %
    The batch-means estimator of $\Sigma$ first presented by \cite{chen1987multivariate} is defined as
    \begin{equation}
    \label{eq:bm_est}
        \hat{\Sigma}_{\text{BM}} := \frac{b_{n}}{a_{n} - 1} \sum_{k=1}^{a_{n}} (\tilde{g}_{k} - \Bar{g}_n)(\tilde{g}_{k} - \Bar{g}_n)^{\top}.
    \end{equation}
    Let $\left(\hat{\sigma}_{\text{BM}}^{(s)}\right)^{2}$ denote the $s^{\text{th}}$ diagonal element of $\hat{\Sigma}_{\text{BM}}$ and set
    \begin{equation}
    \label{eq:bm_diag}
        \hat{L}_{\text{BM}} = \text{diagonal} \left(\hat{\sigma}_{\text{BM}}^{(1)}, \hat{\sigma}_{\text{BM}}^{(2)}, \dots, \hat{\sigma}_{\text{BM}}^{(d)}  \right)\,.
    \end{equation}
    Using \eqref{eq:bm_est} and \eqref{eq:bm_diag}, we can obtain a batch-means estimator of the correlation matrix,
    \begin{equation}
        \hat{\mathcal{R}}_{\text{BM}} = \hat{L}_{\text{BM}}^{-1} \hat{\Sigma}_{\text{BM}} \hat{L}_{\text{BM}}^{-1}\,.
    \end{equation}
In the presence of positive autocorrelation in the components of the process $\{g(X_t)\}$, for any $s^{\text{th}}$ marginal, $\hat{\sigma}_{\text{BM}}^{(s)}$ underestimates $\sigma^{(s)}$ \citep{flegal2010batch}, with the underestimation being worse when autocorrelation is high. Underestimating these marginal variances has a significant impact on the quality of inference. 
However, empirically, we notice that the relative bias in the estimation of the correlation matrix can be considerably smaller than the relative bias in the estimation of the covariance matrix; we present simulations in support of this in Appendix~\ref{sec:underestimation}. 
Given this observation, we propose to utilize $\hat{\mathcal{R}}_{\text{BM}}$ to estimate $\mathcal{R}$, and estimate $L$ using an ISE estimator.

Let $\hat{L}_{\text{ISE}}$ denote the diagonal matrix of \possessivecite{geyer1992practical} ISE standard deviations
    \begin{equation}
    \label{eq:ise_diag}
        \hat{L}_{\text{ISE}} = \text{diagonal} \left(\hat{\sigma}_{\text{ISE}}^{(1)}, \hat{\sigma}_{\text{ISE}}^{(2)}, \dots, \hat{\sigma}_{\text{ISE}}^{(d)}  \right)\,.
    \end{equation}
Using $\hat{\mathcal{R}}_{\text{BM}}$ and $\hat{L}_{\text{ISE}}$ we obtain our proposed covariance-correlation ISE estimator of $\Sigma$
\begin{equation}
    \hat{\Sigma}_{\text{cc}} = \hat{L}_{\text{ISE}} \hat{\mathcal{R}}_{\text{BM}} \hat{L}_{\text{ISE}}\,.
\end{equation}
%
    %
%
The construction and computational workflow of $\hat{\Sigma}_{\text{cc}}$ is explained in Algorithm~\ref{algo:cc_estimator}. Since the marginal univariate ISEs are efficient, obtaining the matrix $\hat{L}_{\text{ISE}}$ is $\mathcal{O}(d n \log n)$, and since batch-means estimators are fairly efficient, $\hat{\mathcal{R}}_{\text{BM}}$ also has low computational burden. This yields an overall computational cost for $\hat{\Sigma}_{\text{cc}}$ to be $\mathcal{O}(d^2a_n+ dn \log n)$, and in Section~\ref{sec:examples} we show that the observed computational cost is significantly lower than mISE. We follow the recommendation of \cite{vats2021batch} and employ batch size $b_n \propto n^{1/3}$, so that the computational complexity of $\hat{\Sigma}_{\text{cc}}$ for our implementation is $\mathcal{O}(d^2 n^{2/3} + dn \log n)$. Compared to the complexity of the $\hat{\Sigma}_{\text{ISE}}$ estimator, $\hat{\Sigma}_{\text{cc}}$ scales significantly better in dimension. In our simulations, we demonstrate that $t_n$ scales slower than $\log n$, which then implies that $\hat{\Sigma}_{\text{cc}}$ also scales better in $n$ as well.

\begin{algorithm}
        \caption{Proposed CC-ISE estimator with batch size $b_n$} 
        \label{algo:cc_estimator}
        \KwData{$\{g(X_{1}), \ldots, g(X_{n})\}$}
        Calculate $\hat{\Sigma}_{\text{BM}}$ \Comment{Cost: $\mathcal{O}(d^2a_n + dn)$}\\
        Calculate $\hat{\mathcal{R}}_{\text{BM}}$ from $\hat{\Sigma}_{\text{BM}}$ \Comment{Cost: $\mathcal{O}(d^2)$}\\ 
        Calculate $\hat{L}_{\text{ISE}}$ \Comment{Cost: $\mathcal{O}(d n \log n)$}\\ 
        Set $\hat{\Sigma}_{\text{cc}} = \hat{L}_{\text{ISE}} \hat{\mathcal{R}}_{\text{BM}}  \hat{L}_{\text{ISE}}$ \Comment{Cost: $\mathcal{O}(d^2)$}\\ 
        \KwRet {\normalfont {$\hat{\Sigma}_{\text{cc}}$}} \Comment{Total Cost: $\mathcal{O}(d^{2}a_n + dn \log n)$}\\ 
    \end{algorithm}

The marginal variances are well estimated by the ISE, and the resultant estimator is guaranteed to be positive semi-definite. 
Later in Section~\ref{sec:examples} we demonstrate these finite-sample properties of $\hat{\Sigma}_{\text{cc}}$ and here provide conditions under which $\hat{\Sigma}_{\text{cc}}$ is asymptotically conservative. This naturally requires conditions ensuring asymptotic properties of batch-means estimators. 
\begin{assm}
    \label{assm:1} Let $\Vert \cdot \Vert$ denote the Euclidean norm and let $\{B(t): t \ge 0\}$ denote a $d$-dimensional standard Brownian motion. We assume a strong invariance principle holds in that there exists  $d \times d$  positive-definite matrix $M$ such that $\Sigma = MM^{\top}$, a univariate positive random variable $D$ and a positive increasing function $\kappa(n)$  such that as $n \rightarrow \infty$, with probability $1$
\begin{equation}
    \left \Vert \sum_{t=1}^{n} g(X_t) - n \theta - M B(n) \right\Vert < D \kappa(n)\,.
\end{equation}
\end{assm}

There has been much work done in ensuring a strong invariance principle holds for MCMC \citep{jones2006fixed,li2024multivariate}. \cite{vats2018strong} show that $\kappa(n) = n^{1/2 - \lambda}$ for some $0 < \lambda < 1/2$ for when $\{X_t\}$ is polynomially ergodic. \cite{banerjee2023multivariate} show $\kappa(n) = n^{\beta} \log n$ for $\beta = \max\{1/(2 + \delta), 1/4\}$ for when the process is either geometrically or polynomially ergodic with moment conditions on $g$. 
\begin{assm}
    \label{assm:2} The batch-size $b_{n}$ should be chosen so that
\begin{itemize} 
    \item[1.] $b_{n}$ and $n/b_{n}$ are increasing integer sequences such that as $n \rightarrow \infty$, $b_{n} \rightarrow \infty$ and $n/b_{n} \rightarrow \infty$.
    \item[2.] there exists $c \ge 1$ such that $\sum_{n=1}^{\infty} (b_{n}/n)^{c} < \infty$.
    \item[3.] $b_{n}^{-1} \log(n) \kappa^{2}(n) \rightarrow 0$ as $n \rightarrow \infty$.
\end{itemize}
\end{assm}
For the batch-means estimator to be strongly consistent, the number of batches and batch-size both must increase to infinity. Further, the batch-size is typically governed by the strength of the correlation 
 in the Markov chain. It is common to use batch-sizes of the form $b_n = \lfloor n^{\nu} \rfloor$ for some $0 < \nu < 1$, and we employ the recommended batch size of $b_n \propto n^{1/3}$ of \cite{vats2021batch}.


\begin{theorem}
    \label{thm:Gen_var}
    Let $\{X_{t}\}_{t\ge 1}$ be a $\pi$-reversible Harris ergodic stationary Markov chain such that Assumption~\ref{assm:1} holds for $g:\mathcal{X} \to \mathbb{R}^d$. If Assumption~\ref{assm:2} holds, then we have
    \begin{enumerate}[(a)]
        \item $\liminf_{n \rightarrow \infty} \text{det}\left( \hat{\Sigma}_{\text{cc}} \right) \ge \text{det}\left( \Sigma \right) \text{   with probability  } 1, \text{  and  }$
        \item for $\hat{a}_{ij}$ and $a_{ij}$ denoting the $(i, j)^{\text{th}}$ element of $\hat{\Sigma}_{cc}$ and $\Sigma$, respectively,
        \begin{equation*}
            \liminf_{n \rightarrow \infty} \vert \hat{a}_{ij} \vert \ge \vert a_{ij} \vert \text{   with probability  } 1.
        \end{equation*}
    \end{enumerate}
    
\end{theorem}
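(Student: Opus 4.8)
The plan is to control the three ingredients of $\hat\Sigma_{\text{cc}} = \hat L_{\text{ISE}} \hat{\mathcal R}_{\text{BM}} \hat L_{\text{ISE}}$ separately and then assemble them. First I would recall that under the reversibility and CLT hypotheses, the univariate result of \cite{geyer1992practical} (Theorem~\ref{thm:geyer_ise}) applied componentwise gives $\liminf_{n\to\infty} \hat\sigma^{(s)}_{\text{ISE}} \ge \sigma^{(s)}$ almost surely for each $s = 1,\dots,d$; since there are finitely many components, these hold simultaneously on a single probability-one set. Hence, writing $\hat L_{\text{ISE}} = \mathrm{diag}(\hat\sigma^{(1)}_{\text{ISE}},\dots,\hat\sigma^{(d)}_{\text{ISE}})$, every diagonal entry is eventually bounded below by the corresponding entry of $L$. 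Second, I would invoke strong consistency of the multivariate batch-means estimator under Assumptions~\ref{assm:1} and~\ref{assm:2} (this is exactly the regime of \cite{vats2019multivariate}), giving $\hat\Sigma_{\text{BM}} \to \Sigma$ almost surely, hence $\hat L_{\text{BM}} \to L$ and therefore $\hat{\mathcal R}_{\text{BM}} = \hat L_{\text{BM}}^{-1}\hat\Sigma_{\text{BM}}\hat L_{\text{BM}}^{-1} \to L^{-1}\Sigma L^{-1} = \mathcal R$ almost surely (using that $\Sigma$, hence $L$, is positive-definite so the inverses are continuous near the limit).

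For part~(a), I would write $\det(\hat\Sigma_{\text{cc}}) = \det(\hat L_{\text{ISE}})^2 \det(\hat{\mathcal R}_{\text{BM}}) = \left(\prod_{s=1}^d \hat\sigma^{(s)}_{\text{ISE}}\right)^2 \det(\hat{\mathcal R}_{\text{BM}})$, and similarly $\det(\Sigma) = \left(\prod_{s=1}^d \sigma^{(s)}\right)^2 \det(\mathcal R)$. Since $\det(\hat{\mathcal R}_{\text{BM}}) \to \det(\mathcal R) > 0$ (the target covariance is positive-definite, so its correlation matrix has strictly positive determinant) and each $\liminf \hat\sigma^{(s)}_{\text{ISE}} \ge \sigma^{(s)} > 0$, taking $\liminf$ through the product — legitimate because all factors are positive and the limit factor converges — yields $\liminf_n \det(\hat\Sigma_{\text{cc}}) \ge \left(\prod_s \sigma^{(s)}\right)^2 \det(\mathcal R) = \det(\Sigma)$.

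For part~(b), I would write the $(i,j)$ entry as $\hat a_{ij} = \hat\sigma^{(i)}_{\text{ISE}}\, \hat r_{ij}\, \hat\sigma^{(j)}_{\text{ISE}}$ and $a_{ij} = \sigma^{(i)} r_{ij} \sigma^{(j)}$, so $|\hat a_{ij}| = \hat\sigma^{(i)}_{\text{ISE}} |\hat r_{ij}| \hat\sigma^{(j)}_{\text{ISE}}$. Here $|\hat r_{ij}| \to |r_{ij}|$ by the almost-sure convergence of $\hat{\mathcal R}_{\text{BM}}$, and the two ISE standard deviations have $\liminf$ at least $\sigma^{(i)}$ and $\sigma^{(j)}$; multiplying three nonnegative sequences, one convergent and two with known $\liminf$ lower bounds, gives $\liminf_n |\hat a_{ij}| \ge \sigma^{(i)} |r_{ij}| \sigma^{(j)} = |a_{ij}|$. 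The main obstacle — really the only non-routine point — is making sure the batch-means consistency statement is available under precisely Assumptions~\ref{assm:1}--\ref{assm:2} with reversibility; everything else is elementary $\liminf$ arithmetic on products of nonnegative sequences, which needs the (easy) observation that $\liminf (x_n y_n) \ge (\liminf x_n)(\lim y_n)$ when $x_n, y_n \ge 0$ and $y_n$ converges to a nonnegative limit.
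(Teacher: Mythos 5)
Your proposal is correct and follows essentially the same route as the paper: componentwise application of Geyer's univariate liminf result for $\hat{L}_{\text{ISE}}$, strong consistency of $\hat{\Sigma}_{\text{BM}}$ under Assumptions~\ref{assm:1}--\ref{assm:2} to get $\hat{\mathcal{R}}_{\text{BM}} \to \mathcal{R}$ almost surely, and then elementary liminf arithmetic on the determinant and on each entry $\hat\sigma^{(i)}_{\text{ISE}}\,\hat r_{ij}\,\hat\sigma^{(j)}_{\text{ISE}}$. The only (harmless) difference is that you obtain $\det(\hat{\mathcal{R}}_{\text{BM}}) \to \det(\mathcal{R})$ directly by continuity of the determinant, whereas the paper routes this through strong consistency of the eigenvalues of $\hat{\mathcal{R}}_{\text{BM}}$.
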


\begin{proof}
    See Appendix~\ref{pf:Gen_var}.
\end{proof}



\begin{remark}
    Theorem~\ref{thm:Gen_var} establishes asymptotic conservativeness of the generalized variance of $ \hat{\Sigma}_{\text{cc}}$, a result similar to that of mISE from \cite{dai2017multivariate}. The result implies that asymptotically,  the confidence regions created from $\hat{\Sigma}_{\text{cc}}$ will not have a smaller volume than the oracle. Although generalized variance has its statistical importance, it fails to comment on the estimation of all the cross-covariance terms of $\Sigma$. Part $(b)$ of the theorem establishes element-wise asymptotic conservativeness.
    \end{remark}

    \begin{remark}
    Similar to \cite{dai2017multivariate,geyer1992practical,kosorok2000monte, vats2019multivariate}, our asymptotic results assume stationarity of the Markov chain. Starting from any arbitrary distribution introduces a period of initial transience.   \cite{alex2007replicatedbm} and \cite{ockerman1997impact} discuss the effect of this initial transience on variance estimators and present some empirical insights and possible corrections. Establishing asymptotic results for any arbitrary distribution remains an open problem for univariate and multivariate initial sequence estimators, as well as batch-means estimators.
\end{remark}


The univariate MLS estimator of \cite{berg2023efficient} can be modified similarly for the multivariate setting. We define $\hat{\Sigma}_{\text{MLS}} := \hat{L}_{\text{MLS}} \hat{\mathcal{R}}_{\text{BM}} \hat{L}_{\text{MLS}}^{\top}$ where $\hat{L}_{\text{MLS}} = \text{diagonal}(\hat{\sigma}_{\text{MLS}}^{(i)}; 1 \leq i \leq d)$. Under the conditions ensuring strong consistency of the MLS estimator from \cite{berg2023efficient}, $\hat{\Sigma}_{\text{MLS}}$ can also be shown to be strongly consistent. However, we will show in Section~\ref{sec:examples} that this estimator, although faster than the mISE, is significantly slower than $\hat{\Sigma}_{\text{cc}}$.

\section{Numerical Implementations}
\label{sec:examples}

\subsection{Vector Autoregressive Process}

We study the comparative performance of the estimators in a benchmark vector autoregressive (VAR) process example. The VAR example yields a setting where the true $\Sigma$ is known. Consider a VAR process of order $1$, $\{X_{t}\}_{t\ge 1}$,  such that for a $d \times d$ matrix $\Phi$ and $t = 1, 2, \ldots$
\begin{equation*}
    X_{t} = \Phi X_{t-1} + \epsilon_{t}\,,
\end{equation*}
where $\epsilon_{t} \overset{\text{iid}}{\sim} \text{Normal}(0, \Omega)$, $\Omega$ being a $d \times d$ positive-definite matrix. Let matrix $V$ be such that $\textsc{vec}(V) = (I_{d^{2}} - \Phi \otimes\Phi) \textsc{vec}(\Omega)$ where $\otimes$ denotes Kronecker product. Then, if the spectral norm of $\Phi$ is less than 1, $\text{Normal}(0, V)$ is the invariant distribution for $\{X_{t}\}_{t\ge 1}$ and \cite{osawa1988reversibility} showed that the process is reversible when $\Phi \Omega$ is symmetric. For estimation of $\E_{\pi}(X) = 0$, we consider $\bar{g}_n = n^{-1} \sum_{t=1}^{n} X_t$. The asymptotic covariance is given by
\begin{equation*}
    \Sigma_{\text{True}} = (I_{d} - \Phi)^{-1} V + V (I_{d} - \Phi)^{-1} - V.
\end{equation*}

We borrow the setup of \cite{dai2017multivariate} and choose $\Omega = I_{d}$ and $\Phi = d^{-1} \mathcal{H}_{d} $ $\text{diag}(\rho^{-1}, \rho^{-2}, \ldots, \rho^{-d})\mathcal{H}_{d}^{\top}$, where $\mathcal{H}_{d}$ is the Hadamard matrix of dimension $d = 12$. With this choice of $\Phi$ and $\Omega$, the process $\{X_{t}\}_{t\ge 1}$ is reversible. For $\rho = 1.01$, we compare the performance of $\hat{\Sigma}_{\text{cc}}$ with existing estimators. For $n = 5\times 10^{3} \text{  to  } 5 \times 10^{5}$, we assess the performance of estimators of $\Sigma$  by four metrics (i) estimated effective sample size (ESS), (ii) relative Frobenius norm, (iii) computation time, and (iv) coverage probability, all averaged over 1000 replications. \cite{vats2019multivariate} defines the estimated ESS as
\begin{equation*}
    \widehat{\text{ESS}} = n \left(\frac{\text{det}(\hat{\zeta}_{0})}{\text{det}(\hat{\Sigma})}\right)^{1/d},
\end{equation*}
where $\hat{\zeta}_{0}$ is the sample covariance matrix and $\hat{\Sigma}$ is the estimator of interest for $\Sigma$. \cite{vats2021revisiting} showed that there is a one-to-one correspondence between ESS and the $R$-hat of \cite{gelman1992inference}. Consequently, stopping simulation when $\widehat{\text{ESS}}$ is larger than a pre-specified lower bound is standard, and thus it is critical to ensure estimators of $\Sigma$ do not underestimate $\text{det}(\Sigma)$.
For $\| \cdot \|_{\text{F}}$ denoting Frobenius norm, the relative Frobenius norm is
\begin{equation*}
    \text{Relative Frobenius Norm} = \frac{\Vert \Sigma - \hat{\Sigma}\Vert_{\text{F}}}{\Vert \Sigma \Vert_{\text{F}}}.
\end{equation*}

\begin{figure}
  \begin{minipage}[b]{0.5\linewidth}
    \centering
    \includegraphics[width=.9\linewidth]{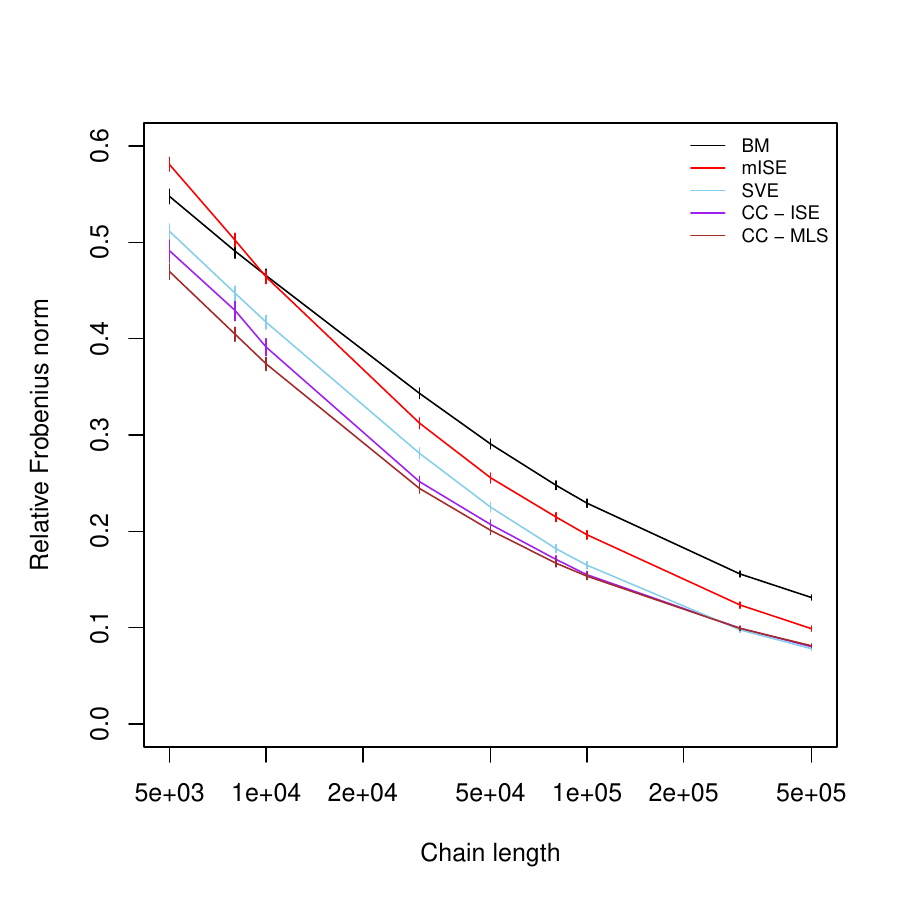} 
    \vspace{4ex}
  \end{minipage} 
  \begin{minipage}[b]{0.5\linewidth}
    \centering
    \includegraphics[width=.9\linewidth]{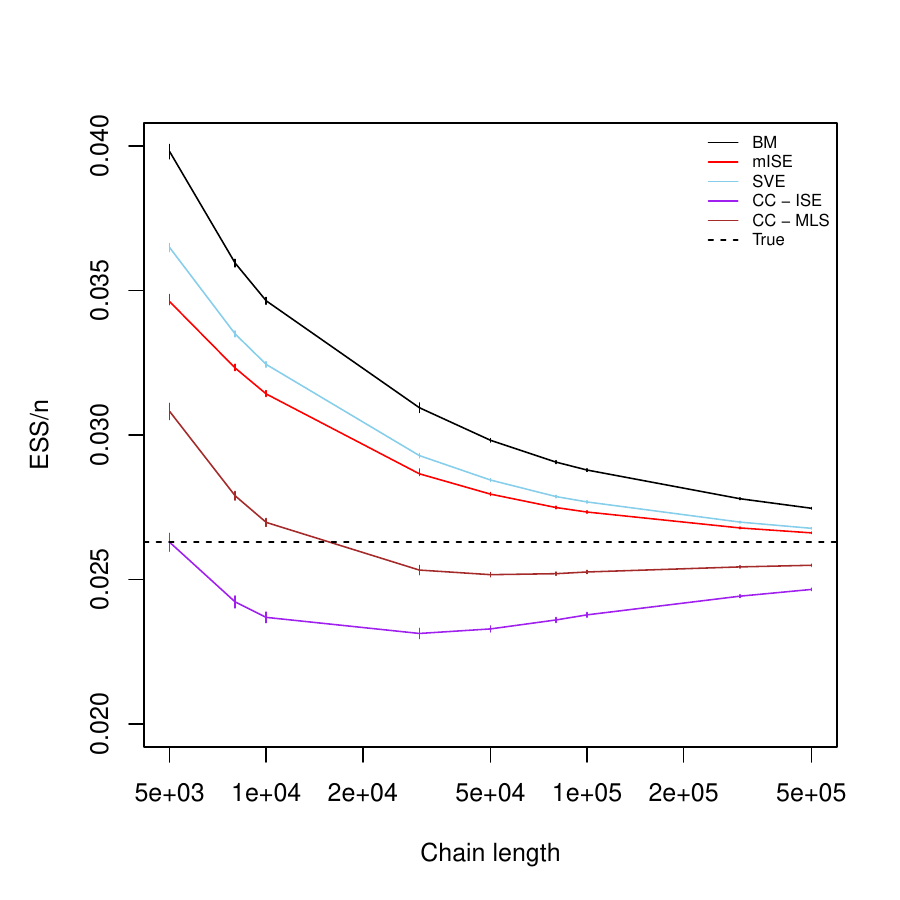} 
    \vspace{4ex}
  \end{minipage}\\
  \begin{minipage}[b]{0.5\linewidth}
    \centering
    \includegraphics[width=.9\linewidth]{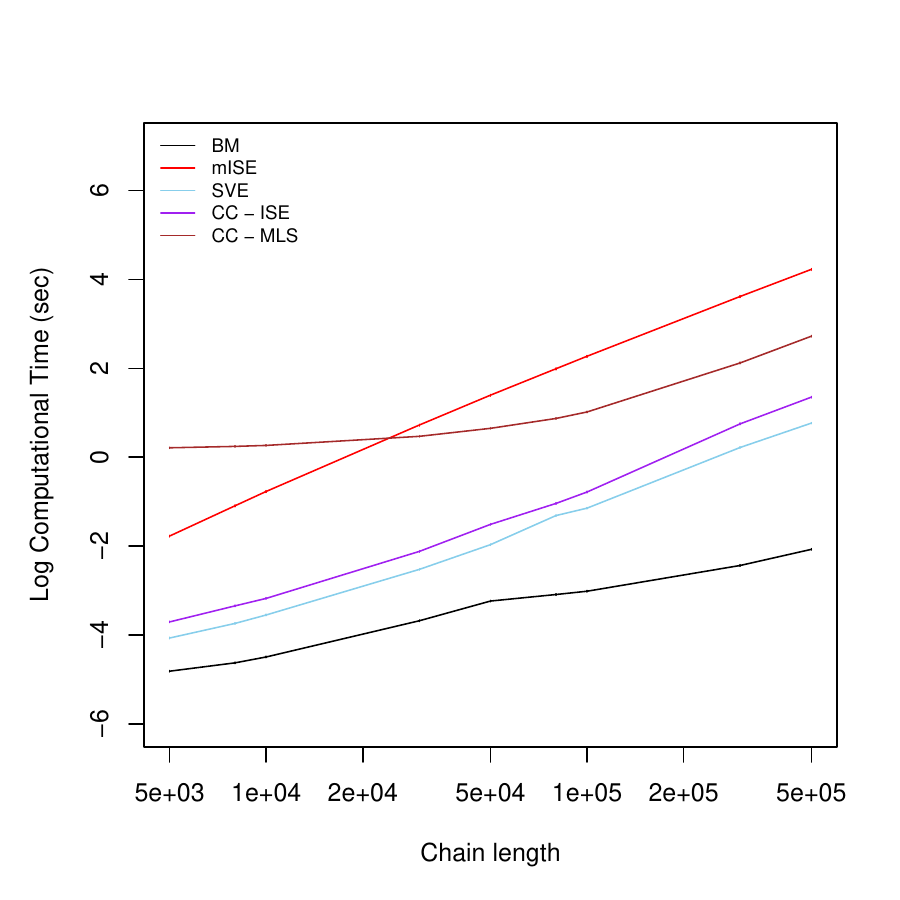} 
    \vspace{4ex}
  \end{minipage} 
  \begin{minipage}[b]{0.5\linewidth}
    \centering
    \includegraphics[width=.9\linewidth]{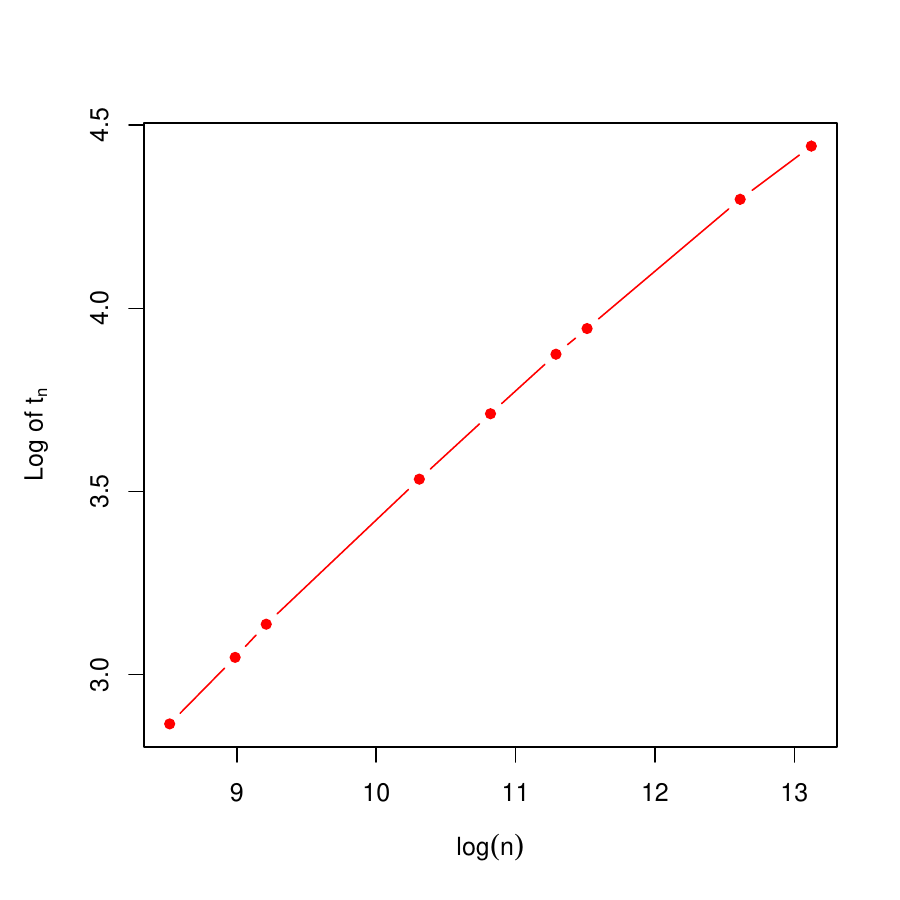} 
    \vspace{4ex}
  \end{minipage} 

  \caption{(Top left) Relative Frobenius norm of estimators of $\Sigma$. (Top right) Estimated ESS$/n$. (Bottom left) Log computation time in seconds. (Bottom right) Log of truncation time ($\log t_n$) for mISE as compared to $\log n$, indicating a polynomial rate for $t_n$.}
  
  \label{fig:var_plots} 
\end{figure}

We compare the $\hat{\Sigma}_{\text{cc}}$ (CC-ISE) with the \possessivecite{dai2017multivariate} mISE estimator, along with batch-means (BM) and spectral variance estimator (SVE). Additionally, we also implement CC-MLS, the covariance-correlation estimator built with \possessivecite{berg2023efficient} MLS estimator for the marginal variances. The results are provided in Figure~\ref{fig:var_plots}.  First, both the CC estimators yield the lowest relative Frobenius norm, significantly lower than the mISE estimator. For ESS, as explained earlier, in order to avoid early termination of simulation, underestimating the true quantity is more favorable than overestimating; here we see clearly that both the CC estimators converge to the true value from below. These gains in finite-sample performance are supported by quick computation. The mISE estimator is prohibitively slow, particularly for the large sample sizes typically found in MCMC. In contrast, $\hat{\Sigma}_{\text{cc}}$ is much faster, only a little slower than SVE. The CC-MLS estimator is slower than CC-ISE since the marginal variance calculations are significantly slower.

Over the 1000 replications, we also track the truncation time for the mISE estimator for varying sample sizes. In the bottom right plot of Figure~\ref{fig:var_plots}, we demonstrate the rate of increase of $\log t_n$ for the mISE estimator as it compares to $\log n$. The plot shows that $\log t_{n}$ is linear in $\log n$ and thus $t_n$ is likely to be polynomial in $n$. Thus, in this example, $t_n$ certainly dominates $\log n$ and diverges to infinity as chain length increases.

For assessing the coverage probability, we determine whether the asymptotic confidence ellipsoid contains the true value of $\E_{\pi}X = 0$. An asymptotic $95\%$ confidence ellipsoid from \eqref{eq: corr_cov}, using an estimator $\hat{\Sigma}$ of $\Sigma$, is
\begin{equation*}
    {C}_{0.95} = \left\{\mu \in \mathbb{R}^{d} : n(\bar{g}_n - \mu)^{\top} \hat{\Sigma}^{-1}(\bar{g}_n - \mu) < \chi^{2}_{d, 0.95} \right\}.
\end{equation*}
Over 1000 replications and for competing estimators of $\Sigma$, we determine the number of times the estimated confidence ellipsoid contains 0. Results are in Table~\ref{table:VAR_coverage}. As expected, BM and SVE have low coverage for smaller sample sizes due to the underestimation of the determinant of $\Sigma$. \possessivecite{dai2017multivariate} mISE also has low coverage at smaller sample sizes, indicating an underestimation of the target variance. On the other hand, CC estimators exhibit improved coverage at low sample sizes as well as large sample sizes, with CC-ISE showing better performance than CC-MLS.

\begin{table}
\centering
\begin{tabular}{ |p{2cm}|ccccc|  }
 
 \hline
   Method & \multicolumn{5}{|c|}{Chain Length} \\
 \hline
& $5 \times 10^{3}$ & $10^{4}$ & $5 \times 10^{4}$ & $10^5$ & $5\times10^{5}$\\
 \hline
 BM & $0.474$ & $0.664$ & $0.883$ & $0.887$ & $0.952$\\
 SVE & $0.589$ & $0.751$ & $0.896$ & $0.913$ & $0.960$ \\
 mISE& $0.651$ & $0.778$ & $0.906$ & $0.913$ & $0.960$\\
CC-ISE& $0.715$ & $0.883$ & $0.948$ & $0.962$ & $0.974$\\
CC-MLS & $0.623$ & $0.824$ & $0.936$ & $0.944$ & $0.971$\\
 \hline

\end{tabular}
 \caption{Coverage probability at $95 \%$ confidence for different chain lengths.}
 \label{table:VAR_coverage}
\end{table}

%

\subsection{Bayesian Poisson Regression for Spike Train Data}

Consider the spike train dataset from \cite{dangelo2023canale} which uses calcium imaging to observe neuron responses to external events. The data consists of $p = 23$ factors as predictors for 920 responses of the count of activations in each neuron. \cite{dangelo2023canale} use the following Bayesian Poisson regression model to analyze this data.

For $i = 1, 2, \dots, 920$, let $Y_i$ denote the random response and $y_i$ be the realized value of the count of the neuron activations. Further, let  $\mathbf{x}_{i} \in \mathbb{R}^{23}$ denote the $i^{\text{th}}$ vector of predictors. For $\beta \in \mathbb{R}^{23}$, \cite{dangelo2023canale} propose the following model,
\begin{align*}
    Y_{i}\mid \beta & \overset{\text{ind}}{\sim} \text{Poisson} \left( \exp(\mathbf{x}^{\top}_{i} \beta) \right) \\ 
    \beta & \sim \text{Normal} (\mathbf{b}, B).
\end{align*}
The posterior distribution for $\beta$ is unavailable and MCMC is employed to obtain posterior estimates. The Bayesian Poisson regression model is known to be challenging to sample from and \cite{dangelo2023canale} propose a novel Metropolis-Hastings proposal built specifically for this model. We employ their algorithm to generate samples from the posterior distribution. The Metropolis-Hastings sampler is available in the \texttt{bpr} \texttt{R}  package \citep{bpr}.

Similar to \cite{dangelo2023canale}, we set $\mathbf{b} = \mathbf{0}$ and $B =  2 I_{23}$.  Over 100 replications, we run the MCMC algorithm for various Monte Carlo lengths and employ different estimators of $\Sigma$ to estimate the posterior mean vector. The true value of both $\bar{g}_n$ and $\Sigma$ is unknown here, so we rely on the estimation of ESS$/n$ and computation time to compare methods. The results are provided in Figure~\ref{fig:poisson}. 

For ESS$/n$, all estimators converge from above, exhibiting critical overestimation and leaving the simulation vulnerable to early termination. This is because the correlation in the Markov chain is persistent enough to not allow for convergence from below. Having said that, the CC-ISE estimator provides the most conservative estimation amongst its competitors. The mISE estimator has significantly larger ESS estimates. The mISE estimator comes with a high computational cost as well, taking as long as 5 minutes to compute the estimator. The CC-MLS estimator is also computationally involved, due to the repeated application of the MLS estimator for all 23 diagonals and does not exhibit better ESS estimation. Comparatively, the CC-ISE estimator is significantly cheaper to compute and yields desirable finite-sample performance.

\begin{figure}

\centering

\includegraphics[width=2.9in]{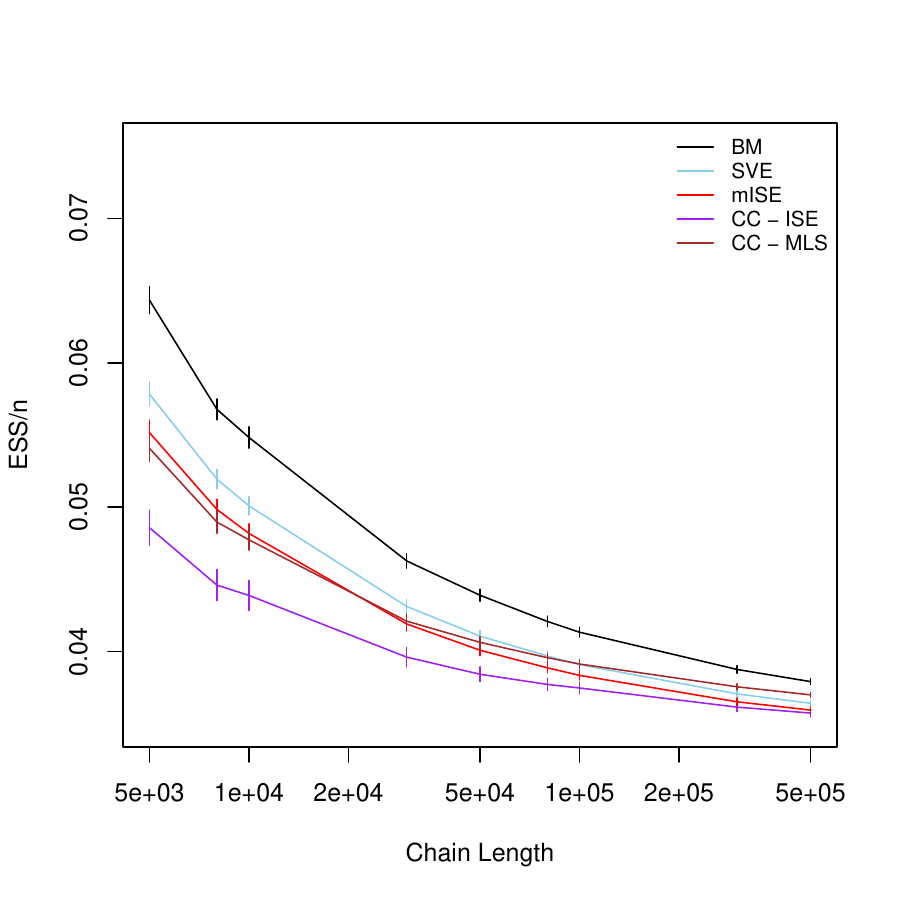}
\includegraphics[width=2.9in]{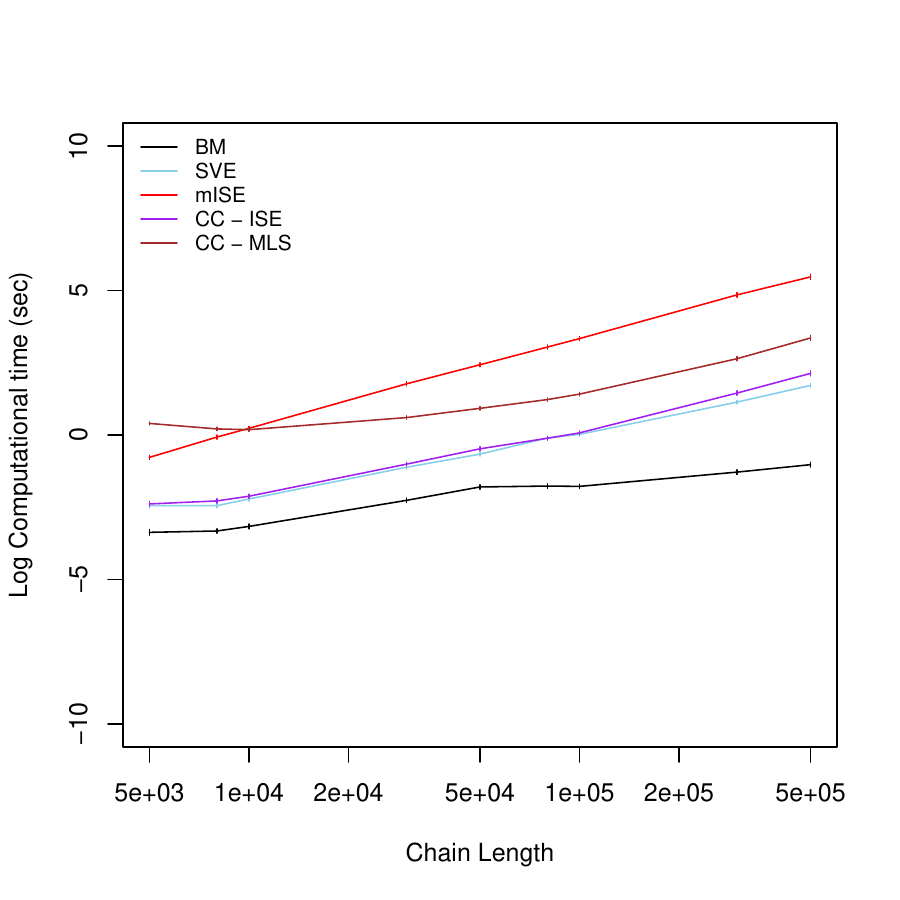}

\caption{(Left) Estimated ESS$/n$ for increasing chain lengths. (Right) Computational time for increasing chain lengths (in seconds).}
\label{fig:poisson}
\end{figure}

\section{Initial sequence estimators for parallel chains}
\label{sec:parallel_ise}

Modern computers have multiple cores that allow for parallel runs of the MCMC to be easily run in the same clock time. Recently, \cite{agarwal2022global,douc2022solving} and \cite{gupta2020estimating} demonstrated the utility of developing specialized methods for estimating $\Sigma$ in such a case. \cite{agarwal2022global} developed globally-centered auto-covariances to estimate $\zeta_{i}$, which we show can be easily adapted to construct a parallel chain version of \possessivecite{geyer1992practical} ISE.

Let $\{X_{t}^{(m)}\}_{1\leq t \leq n}$ be the $m^{th}$ parallel $\pi$-reversible Markov chain for $m = 1, \ldots, M$ generated using the same Markov transition kernel. Consider the case when $d = 1$ and let the Monte Carlo mean for the $m$th chain be $\Bar{g}^{(m)} := n^{-1}\sum_{t=1}^{n} g(X_{t}^{(m)})$ and the global mean be $\Bar{\Bar{g}} := M^{-1} \sum_{m = 1}^{M} \Bar{g}^{(m)}$. \cite{agarwal2022global} define the globally-centered 
 lag-$i$ auto-covariance for the $m^{th}$ parallel chain to be
\begin{equation*}
    \hat{\gamma}_{i;n}^{(m)} = \frac{1}{n} \sum_{t=1}^{n-i} \left(g(X_{t}^{(m)}) - \Bar{\Bar{g}}\right) \left(g(X_{t+i}^{(m)}) - \Bar{\Bar{g}}\right).
\end{equation*}
The final estimator of the auto-covariance is
\begin{equation}
    \label{eq:global_acf}
    \hat{\gamma}_{i;n}^{\text{G}} := \dfrac{1}{M} \sum_{m=1}^{M} \hat{\gamma}_{i;n}^{(m)}\,.
\end{equation}
 \cite{agarwal2022global} demonstrated how a small fix of global-centering can massively improve the estimation of auto-covariance, particularly for slow-mixing Markov chains or multi-modal targets. They specifically demonstrated how globally-centered auto-covariances when employed in spectral-variance estimators, lead to improved estimation of $\sigma^2$ (or $\Sigma$). 

We propose a globally-centered adaptation of the initial sequence estimator of \cite{geyer1992practical}. Let $\hat{\Gamma}_{i;n}^{\text{G}} = (\hat{\gamma}^{\text{G}}_{2i;n} + \hat{\gamma}^{\text{G}}_{2i+1;n})$. We define the globally-centered ISE (G-ISE) as
\begin{equation}
    \hat{\sigma}^{2}_{\text{G-ISE}}  = - \hat{\gamma}^{\text{G}}_{0;n} + 2 \sum_{i=0}^{k_{n}} \hat{\Gamma}^{\text{G}}_{i;n}\,, \label{eq:G-ISE}
\end{equation}
where $k_{n}$ is the largest integer such that $\hat{\Gamma}^{\text{G}}_{i;n} > 0 $ for all $i \in \{1, 2, \ldots, k_{n}\}$. The following theorem establishes the asymptotic conservation result of the GCC-ISE.
\begin{theorem}
\label{thm:gcise_asymp}
    For $\pi$-reversible, Harris ergodic stationary Markov chains such that a univariate CLT holds,
    \begin{equation*}
        \liminf_{n \rightarrow \infty} \hat{\sigma}^{2}_{\text{G-ISE}} \ge \sigma^{2} \qquad \text{with probability $1$}\,.
    \end{equation*}
\end{theorem}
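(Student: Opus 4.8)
The plan is to adapt \possessivecite{geyer1992practical} proof of Theorem~\ref{thm:geyer_ise} essentially verbatim, the only substantive change being that the locally centered sample auto-covariances are replaced throughout by the globally centered estimators $\hat{\gamma}^{\text{G}}_{i;n}$ of \eqref{eq:global_acf}. The population-level facts that drive Geyer's argument are untouched: for a $\pi$-reversible chain, $\Gamma_i = \gamma_{2i}+\gamma_{2i+1} > 0$ for every $i \ge 0$, and a univariate CLT gives $\sigma^2 = -\gamma_0 + 2\sum_{i=0}^{\infty}\Gamma_i$ with the series convergent. Positivity of the $\Gamma_i$ makes the partial sums $T_K := -\gamma_0 + 2\sum_{i=0}^{K}\Gamma_i$ increasing with $T_K \uparrow \sigma^2$, so it is enough to show, for each fixed $K$, that $\liminf_{n\to\infty}\hat{\sigma}^2_{\text{G-ISE}} \ge T_K$ almost surely.

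The one genuinely new ingredient I would need is the strong consistency of the globally centered auto-covariances: for $M$ fixed and $n \to \infty$, $\hat{\gamma}^{\text{G}}_{i;n} \to \gamma_i$ almost surely for each fixed lag $i$, hence $\hat{\Gamma}^{\text{G}}_{i;n} \to \Gamma_i$ almost surely. This follows because each per-chain mean $\bar{g}^{(m)}$ converges almost surely to $\theta$ by the Markov chain ergodic theorem, so the global mean $\bar{\bar g}$ does as well, and centering each chain's lag-$i$ product by the (almost surely convergent) global center rather than its own (almost surely convergent) mean leaves the limit of the ergodic average unchanged; this is exactly the consistency already established for these estimators by \cite{agarwal2022global}. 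No independence among the $M$ chains is required, since the statement is an almost sure limit applied chain by chain and then averaged.

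With these in hand I would run the truncation argument. Fix $K$. Since the reversibility lemma also gives $\Gamma_1 \ge \Gamma_2 \ge \cdots > 0$, the convergence $\hat{\Gamma}^{\text{G}}_{i;n}\to\Gamma_i$ implies that, with probability one, for all large $n$ the criterion defining the truncation index $k_n$ is satisfied through index $K$, i.e.\ $k_n \ge K$ (ties in the $\Gamma_i$ being handled exactly as in \cite{geyer1992practical}). On that event,
\begin{equation*}
\hat{\sigma}^2_{\text{G-ISE}} = -\hat{\gamma}^{\text{G}}_{0;n} + 2\sum_{i=0}^{k_n}\hat{\Gamma}^{\text{G}}_{i;n} \;\ge\; -\hat{\gamma}^{\text{G}}_{0;n} + 2\sum_{i=0}^{K}\hat{\Gamma}^{\text{G}}_{i;n}.
\end{equation*}
Letting $n\to\infty$ and using $\hat{\gamma}^{\text{G}}_{0;n}\to\gamma_0$ together with $\hat{\Gamma}^{\text{G}}_{i;n}\to\Gamma_i$ for $i\le K$ gives $\liminf_n \hat{\sigma}^2_{\text{G-ISE}} \ge T_K$, and sending $K\to\infty$ finishes the proof.

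The step I expect to be the main obstacle is the displayed inequality, that is, showing the discarded block $\sum_{i=K+1}^{k_n}\hat{\Gamma}^{\text{G}}_{i;n}$ is non-negative for large $n$. This cannot come from consistency, because at lags near the random, diverging truncation point $k_n$ the estimates $\hat{\Gamma}^{\text{G}}_{i;n}$ are precisely the noisy ones the truncation is designed to exclude from influence; instead it must be read off from the defining property of $k_n$, exactly as in the combinatorial part of Geyer's proof. Verifying that this piece transfers without change once the auto-covariances are globally centered rather than locally centered is the only delicate point, and I do not expect it to introduce new difficulties since $k_n$ is defined purely in terms of the $\hat{\Gamma}^{\text{G}}_{i;n}$ themselves.
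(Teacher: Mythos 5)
Your proposal is correct and follows essentially the same route as the paper's proof: establish strong consistency of the globally centered auto-covariances at each fixed lag (the paper proves this directly via the ergodic theorem applied chain by chain, just as you sketch), then run Geyer's truncation argument by fixing a finite level ($K$ in your notation, $m_\epsilon$ in the paper's), using consistency at those finitely many lags to force $k_n$ past that level, and invoking the defining positivity of the retained $\hat{\Gamma}^{\text{G}}_{i;n}$ to discard the random block between the fixed level and $k_n$ — exactly the resolution you anticipate for the step you flag as delicate.
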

\begin{proof}
    See Appendix~\ref{sec:app_gcise}.
\end{proof}

Having described the univariate globally-centered ISE estimator, we can now present the covariance-correlation globally-centered multivariate estimator. Let $\left(\hat{\sigma}_{\text{G-ISE}}^{(j)} \right)^2$ denote the globally-centered ISE estimator for the $j$th component of $\theta$. Let $\hat{L}_{\text{G-ISE}}$ denote the diagonal matrix of marginal G-ISE standard deviations
    \begin{equation}
    \label{eq:ise_diag_par}
        \hat{L}_{\text{G-ISE}} = \text{diagonal} \left(\hat{\sigma}_{\text{G-ISE}}^{(1)}, \hat{\sigma}_{\text{G-ISE}}^{(2)}, \dots, \hat{\sigma}_{\text{G-ISE}}^{(d)}  \right)\,.
    \end{equation}
\cite{gupta2020estimating} presents a globally-centered version of the batch-means estimator as well, which we denote by  $\hat{\mathcal{R}}_{\text{G-BM}}$. Using $\hat{\mathcal{R}}_{\text{G-BM}}$ and $\hat{L}_{\text{G-ISE}}$, a globally-centered covariance-correlation ISE (GCC-ISE) of $\Sigma$ can be naturally constructed as:
\begin{equation}
    \hat{\Sigma}_{\text{G-cc}} = \hat{L}_{\text{G-ISE}} \hat{\mathcal{R}}_{\text{G-BM}} \hat{L}_{\text{G-ISE}}\,.
\end{equation}
%
%
\begin{corollary}
    \label{cor:global}
    Let $\{X_{t}^{(s)}\}_{t\ge 1}$ be $\pi$-reversible Harris ergodic stationary Markov chains such that a CLT holds for a function $g$. If Assumptions~\ref{assm:1} and \ref{assm:2} hold, we have
    \begin{equation}
        \liminf_{n \rightarrow \infty} \text{det}\left( \hat{\Sigma}_{\text{G-cc}} \right) \ge \text{det}\left( \Sigma \right)   \text{   with probability   } 1.
    \end{equation}
\end{corollary}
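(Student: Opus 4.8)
The plan is to mirror the proof of Theorem~\ref{thm:Gen_var}(a) (Appendix~\ref{pf:Gen_var}), replacing the batch-means ingredients with their globally-centered counterparts and invoking Theorem~\ref{thm:gcise_asymp} in place of Theorem~\ref{thm:geyer_ise}. Since $\hat{L}_{\text{G-ISE}}$ is diagonal, the determinant factorizes: $\det(\hat{\Sigma}_{\text{G-cc}}) = \det(\hat{L}_{\text{G-ISE}})^{2}\,\det(\hat{\mathcal{R}}_{\text{G-BM}}) = \big(\prod_{j=1}^{d}(\hat{\sigma}^{(j)}_{\text{G-ISE}})^{2}\big)\det(\hat{\mathcal{R}}_{\text{G-BM}})$, and likewise $\det(\Sigma) = \big(\prod_{j=1}^{d}(\sigma^{(j)})^{2}\big)\det(\mathcal{R})$. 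So it suffices to control the two factors separately and recombine them.

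First I would establish that $\hat{\mathcal{R}}_{\text{G-BM}} \to \mathcal{R}$ with probability $1$. Under Assumptions~\ref{assm:1} and \ref{assm:2}, the globally-centered batch-means estimator of \cite{gupta2020estimating} is strongly consistent, $\hat{\Sigma}_{\text{G-BM}} \to \Sigma$ almost surely (the strong invariance principle together with the batch-size conditions gives this exactly as in the single-chain case, with $M$ fixed and $n \to \infty$). In particular the diagonal matrix $\hat{L}_{\text{G-BM}} \to L$ a.s., and since $L$ is invertible, $\hat{L}_{\text{G-BM}}$ is eventually invertible, whence $\hat{\mathcal{R}}_{\text{G-BM}} = \hat{L}_{\text{G-BM}}^{-1}\hat{\Sigma}_{\text{G-BM}}\hat{L}_{\text{G-BM}}^{-1} \to L^{-1}\Sigma L^{-1} = \mathcal{R}$ a.s. Because $\Sigma$ is positive definite, so is $\mathcal{R}$, and therefore $\det(\hat{\mathcal{R}}_{\text{G-BM}}) \to \det(\mathcal{R}) > 0$ a.s.

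Next, for each coordinate $j$, the multivariate CLT for $g$ implies a univariate CLT for $g_{j}$ with asymptotic variance $(\sigma^{(j)})^{2} = \sigma_{jj}$; applying Theorem~\ref{thm:gcise_asymp} to $g_{j}$ gives $\liminf_{n}(\hat{\sigma}^{(j)}_{\text{G-ISE}})^{2} \ge (\sigma^{(j)})^{2}$ a.s. Since a finite product of nonnegative sequences satisfies $\liminf_{n}\prod_{j}x_{n}^{(j)} \ge \prod_{j}\liminf_{n}x_{n}^{(j)}$ (pass to a subsequence attaining the left side, then to a further subsequence along which each factor converges), we obtain $\liminf_{n}\det(\hat{L}_{\text{G-ISE}})^{2} \ge \prod_{j}(\sigma^{(j)})^{2} = \det(L)^{2}$ a.s. Finally I would combine the two pieces using the elementary fact that if $x_{n}\ge 0$ and $y_{n}\to c>0$ then $\liminf_{n}(x_{n}y_{n}) \ge c\,\liminf_{n}x_{n}$ (take a subsequence realizing $\liminf(x_{n}y_{n})$ and divide by $y_{n}\to c$), applied with $x_{n} = \det(\hat{L}_{\text{G-ISE}})^{2}$, $y_{n} = \det(\hat{\mathcal{R}}_{\text{G-BM}})$, $c = \det(\mathcal{R})$. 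This yields $\liminf_{n}\det(\hat{\Sigma}_{\text{G-cc}}) \ge \det(\mathcal{R})\det(L)^{2} = \det(\Sigma)$ a.s.

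I expect the main obstacle to be the first step: verifying almost-sure consistency of the globally-centered batch-means correlation estimator $\hat{\mathcal{R}}_{\text{G-BM}}$ under Assumptions~\ref{assm:1} and \ref{assm:2} in the parallel-chain regime, i.e.\ confirming that the strong-invariance-principle argument that underlies strong consistency of $\hat{\Sigma}_{\text{BM}}$ carries over to the globally-centered, $M$-chain construction (the global centering introduces an $O(1/M)$ perturbation of the centering term that must be shown negligible). Once that is in place, the remaining steps are routine real-analysis manipulations of liminf's and determinant factorizations. A secondary point requiring care is that Theorem~\ref{thm:gcise_asymp} delivers only a one-sided liminf bound on each marginal variance estimator, so the product and recombination arguments must be phrased so as not to presume convergence of the $\hat{\sigma}^{(j)}_{\text{G-ISE}}$.
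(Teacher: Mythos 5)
Your proposal is correct and follows essentially the same route as the paper's proof: decompose $\det(\hat{\Sigma}_{\text{G-cc}})$ into $\det(\hat{L}_{\text{G-ISE}})^{2}\det(\hat{\mathcal{R}}_{\text{G-BM}})$, apply Theorem~\ref{thm:gcise_asymp} coordinatewise for the marginal G-ISE variances, and invoke the strong consistency of the globally-centered batch-means estimator of \cite{gupta2020estimating} under Assumptions~\ref{assm:1} and \ref{assm:2} to get $\det(\hat{\mathcal{R}}_{\text{G-BM}})\to\det(\mathcal{R})$ almost surely, before recombining. Your explicit handling of the liminf-of-products step (and the one-sided nature of the marginal bounds) is if anything more careful than the paper's write-up, which reaches the same conclusion via eigenvalue consistency from \cite{vats2018strong} rather than direct continuity of the determinant.
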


\begin{proof}
    See Appendix~\ref{pf:global_multi}
\end{proof}
The popular software STAN \citep{carpenter2017stan} modifies \possessivecite{geyer1992practical} ISE estimator for the parallel chain scenario using an ad-hoc adjustment in the following way.  In the one-dimensional setup, for $1 \leq m \leq M$, denote
\begin{equation*}
    s_{m}^{2} = \frac{1}{(n-1)} \sum_{t=1}^{n} \left(  g(X_{t}^{(m)}) - \Bar{g}^{(m)}  \right)^{2}.
\end{equation*}
Further, denote $\widehat{W} = M^{-1} \sum_{m=1}^{M} s_{m}^{2}$ and 
\begin{equation*}
    \widehat{B} = \frac{n}{(M-1)} \sum_{m=1}^{M} (\Bar{g}^{(m)} - \Bar{\Bar{g}})^{2}.
\end{equation*}
Let $\hat{\gamma}^{(m)}_{i;n}$ denote the lag-$i$ sample auto-covariance for the $m^{th}$ chain. Then the lag-$i$ auto-covariance proposed by STAN is
\begin{equation*}
    \hat{\gamma}_{i;n}^{\text{STAN}} =  \frac{1}{n} (\widehat{B} - \widehat{W}) + \frac{1}{M} \sum_{m=1}^{M} \hat{\gamma}^{(m)}_{i;n}.
\end{equation*}
The initial sequence estimators based on these auto-covariance estimators are denoted as $\hat{\sigma}^{2}_{\text{STAN-ISE}}$, and their theoretical properties have not been studied. Similar to before, we can build the covariance-correlation estimator using STAN auto-covariances for the multivariate settings, i.e., the diagonal matrix with STAN-ISE standard deviations can be defined as 
\begin{equation*}
     \hat{L}_{\text{STAN-ISE}} = \text{diagonal} \left(\hat{\sigma}_{\text{STAN-ISE}}^{(1)}, \hat{\sigma}_{\text{STAN-ISE}}^{(2)}, \dots, \hat{\sigma}_{\text{STAN-ISE}}^{(d)}  \right)\,.
\end{equation*}
The STAN-based multivariate covariance-correlation estimator is then defined as
\begin{equation*}
    \hat{\Sigma}_{\text{STAN-cc}} = \hat{L}_{\text{STAN-ISE}} \hat{\mathcal{R}}_{\text{G-BM}} \hat{L}_{\text{STAN-ISE}}\,.
\end{equation*}
The performance of estimators $\Sigma_{\text{G-cc}}$ and $\Sigma_{\text{STAN-cc}}$ are compared  for a number of different parallel chains $m = 2, 4, 8$ and $16$ for the vector auto-regressive process of Section~\ref{sec:examples}. We compare the relative Frobenius norm using samples of sizes $10^{3}$ and $10^{5}$ with $100$ replications. Results are in Figure~\ref{fig:parallel}. It is clear that although for larger sample sizes $\Sigma_{\text{G-cc}}$ and $\hat{\Sigma}_{\text{STAN-cc}}$ are comparable, for smaller sample sizes, $\hat{\Sigma}_{\text{G-cc}}$ is far more reliable.

%
\begin{figure}
\centering
\includegraphics[width=7.3cm]{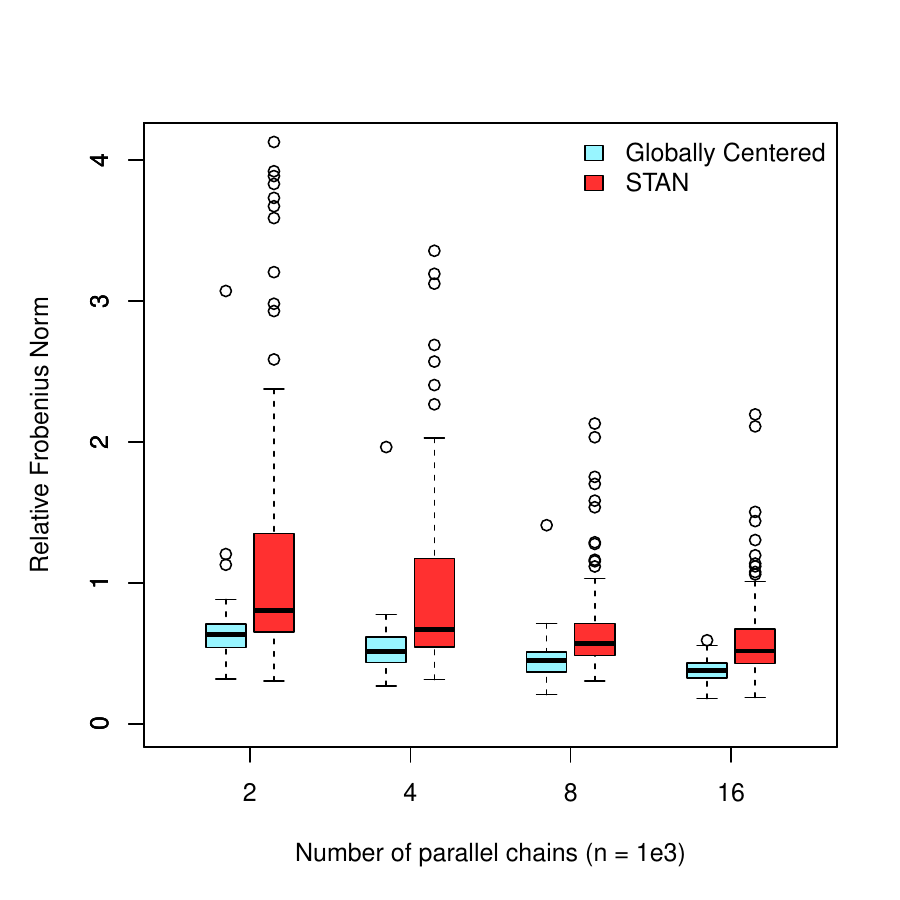}
\includegraphics[width=7.3cm]{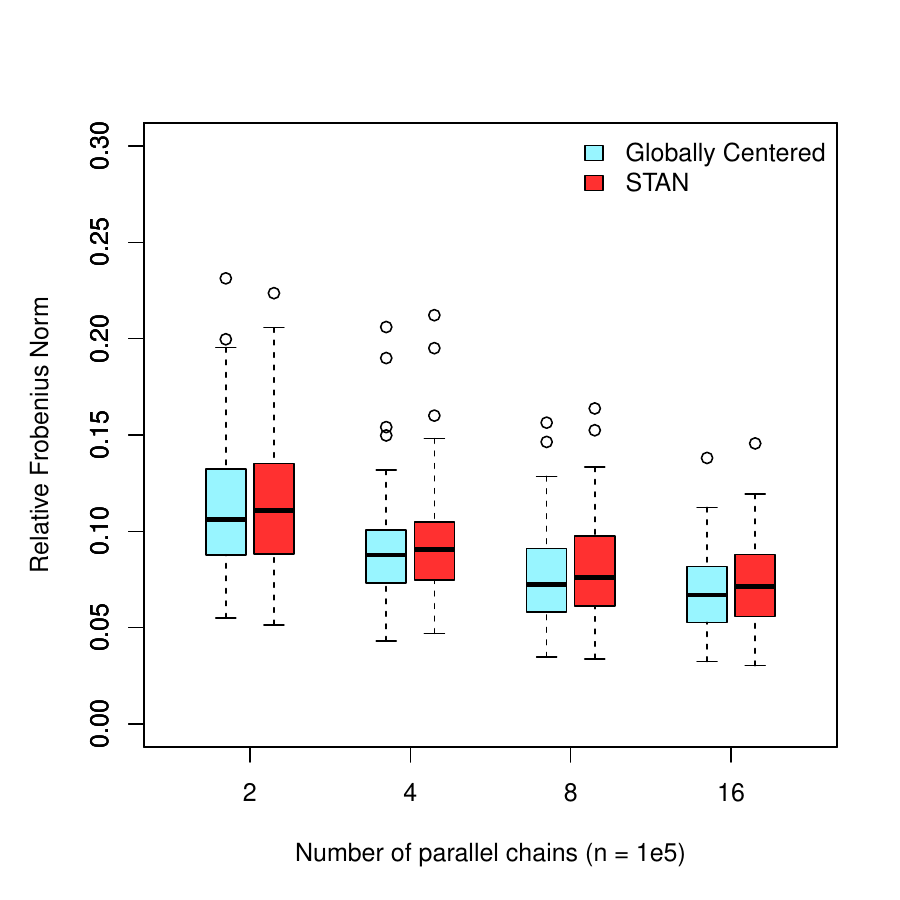}
\caption{Boxplots of estimated relative Frobenius norm for various numbers of parallel chains for (left) $n = 10^3$ and (right) $n = 10^{5}$.}
\label{fig:parallel}
\end{figure}

\section{Discussion}
\label{sec:discussion}

Our central goal was to arrive at a version of a multivariate ISE that is computationally efficient enough to be practically usable and that provides reliable finite-sample results. Since the family of multivariate ISE estimators relies on the estimation of auto-covariance matrices, this has been traditionally challenging. Our solution provides a practical, efficient, and effective alternative to that of \possessivecite{dai2017multivariate} mISE. 

SV estimators also require the estimation of sample lag auto-covariance matrices and, in principle, should be computationally involved. The key difference is that the truncation point is determined \textit{a priori} and not sequentially. In fact, for a fixed truncation point, \cite{heberle2017fast} proposes an algorithm that also relies on discrete Fourier and fast Fourier transforms to yield an efficient implementation of the SV estimator. However, as is evident from our simulations, our proposed alternative yields superior performance at comparative costs. Recently, \cite{song2025multivariate} proposed a multivariate version of their MLS estimator, which requires running their MLS estimator for every element of the $\Sigma$ matrix. Given the computational cost evidenced in Section~\ref{sec:examples} of estimating $d$ diagonals, this estimator is expected to be computationally intensive.

Finally, reversibility of the Markov chain is a critical assumption for employing any initial sequence estimator. It remains an open problem to develop a version of an ISE estimator that is compatible with non-reversible Markov chains.

\section{Acknowledgements}

Dootika Vats is supported by SERB (SPG/2021/001322).

\appendix
\section{Alternative Implementation of ISE}
\label{app:fft}

An alternate way of calculating the ISE estimator of \cite{dai2017multivariate} is to obtain the auto-covariance matrices $\hat{\zeta}_{i;n}$ altogether for all $i$ by element-wise calls to an FFT. In Algorithm~\ref{algo:mise_fft}, we present the details of such an implementation. Compared to Algorithm~\ref{algo:mise} where the complexity was $\mathcal{O}(d^2 n t_n + d^3 t_n)$, Algorithm~\ref{algo:mise_fft} can indeed be faster with a complexity of $\mathcal{O}(d^2 n \log n + d^3 t_n)$. However, in Algorithm~\ref{algo:mise_fft}, the memory burden on the system increases since saving all the auto-covariances in memory costs $\mathcal{O}(nd^{2})$ whereas the original Markov chain requires a memory of only $\mathcal{O}(nd)$. Practically, this can add a significant memory burden on the system. Further, in practice, the $\mathcal{O}(d^3 t_n)$ term remains significant and thus the two implementations are typically not dissimilar. 

\begin{algorithm}[H]
        \caption{mISE (FFT) of \cite{dai2017multivariate}} \label{algo:mise_fft}
        \KwData{$\{g(X_{1}), \ldots, g(X_{n})\}$}
        Calculate $\{\hat{\zeta}_{0;n}, \hat{\zeta}_{1;n}, \ldots, \hat{\zeta}_{n-1;n}\}$; \Comment{Cost: $\mathcal{O}(d^{2}n\log n)$}\\
        $\hat{\Sigma} \gets - \hat{\zeta}_{0;n}$; \\
        \For{$i = 0, 1, \ldots, \lfloor (n-1)/2 \rfloor$}{
            $\hat{\Sigma} = \hat{\Sigma} + 2\hat{Z}_{i;n}$;\\
            \If {\normalfont {$\hat{\Sigma}$ is positive-definite}} {
                $s_{n} = i$;
                \textbf{break};
            } \Comment{Cost: $\mathcal{O}(d^2s_{n})$}
        }
        \For{$i = (s_{n} + 1), \ldots, \lfloor n/2-1 \rfloor$}{
            $S = \hat{\Sigma}$;\\
            $\hat{\Sigma} = \hat{\Sigma} + 2\hat{Z}_{i;n}$;\\
            Calculate \text{det}($\hat{\Sigma}$) and \text{det}($\hat{S}$)  \Comment{Cost: $\mathcal{O}(d^3)$}\\
            \If {\normalfont {\text{det}($\hat{\Sigma}$) $\leq$ \text{det}($S$)}}{
                $t_{n} = (i-1)$;   \Comment{The truncation time} \\ 
                $\hat{\Sigma}_{\text{ISE}} \gets S$;
                \textbf{break};
            } \Comment{Cost: $\mathcal{O}(d^{3}t_{n})$}
        }
        \KwRet {\normalfont {$\hat{\Sigma}_{\text{ISE}}$}} \Comment{Total Cost: $\mathcal{O}(d^{2} n \log n + d^3t_n)$}\\
    \end{algorithm}

\section{Estimation Quality of Batch-Means Correlations}
\label{sec:underestimation}
For $\Sigma$ being the true asymptotic covariance matrix and $\sigma_{ij}$s to be the elements of $\Sigma$, the elements of the correlation matrix, $\mathcal{R}$, are given by
\begin{align*}
    r_{ij} = f(\sigma_{ij}, \sigma_{ii}, \sigma_{jj}) = \frac{\sigma_{ij}}{\sqrt{\sigma_{ii} \ \sigma_{jj}}}\, .
\end{align*}
For the batch-means estimator $\widehat{\Sigma}_{\rm BM} = \{\hat{\sigma}_{ij}\}$, the components of the estimated correlation matrix are
\begin{align*}
    \hat{r}_{ij} = f(\hat{\sigma}_{ij}, \hat{\sigma}_{ii}, \hat{\sigma}_{jj}) = \frac{\hat{\sigma}_{ij}}{\sqrt{\hat{\sigma}_{ii} \ \hat{\sigma}_{jj}}}\,.
\end{align*}
By a Taylor's series expansion of $\hat{r}_{ij}$ around $r_{ij}$,
\begin{align*}
    \hat{r}_{ij} & = r_{ij} + (\hat{\sigma}_{ij} - \sigma_{ij}, \hat{\sigma}_{ii} - \sigma_{ii}, \hat{\sigma}_{jj} - \sigma_{jj}) \nabla f(\sigma_{ij}, \sigma_{ii}, \sigma_{jj}) + \ldots \\
    & \approx r_{ij} + (\hat{\sigma}_{ij} - \sigma_{ij}, \hat{\sigma}_{ii} - \sigma_{ii}, \hat{\sigma}_{jj} - \sigma_{jj}) \nabla f(\sigma_{ij}, \sigma_{ii}, \sigma_{jj}) \\
    \Rightarrow \text{Bias}(\hat{r}_{ij}) & \approx \left(\text{Bias}(\hat{\sigma}_{ij}), \text{Bias}(\hat{\sigma}_{ii}), \text{Bias}(\hat{\sigma}_{jj})\right) \nabla f(\sigma_{ij}, \sigma_{ii}, \sigma_{jj})\\
    & = \frac{\text{Bias}(\hat{\sigma}_{ij})}{\sqrt{\sigma_{ii} \ \sigma_{jj}}} - \frac{r_{ij}}{2} \frac{\text{Bias}(\hat{\sigma}_{ii})}{\sigma_{ii}} - \frac{r_{ij}}{2} \frac{\text{Bias}(\hat{\sigma}_{jj})}{\sigma_{jj}}\\
    \Rightarrow \frac{\text{Bias}(\hat{r}_{ij})}{r_{ij}} & \approx \frac{\text{Bias}(\hat{\sigma}_{ij})}{\sigma_{ij}} - \frac{1}{2}\left( \frac{\text{Bias}(\hat{\sigma}_{ii})}{\sigma_{ii}} +  \frac{\text{Bias}(\hat{\sigma}_{jj})}{\sigma_{jj}} \right)\,.
\end{align*}
The above provides a relationship between the relative bias of the batch-means correlation estimator and the relative bias of the batch-means covariance matrix estimator. Let (assuming $\pi$-reversibility),
\[
\Xi =  2 \sum_{k=1}^{\infty} k \text{Cov}(X_1, X_{1+k}) \,.
\]
Then under regularity conditions, with $b_n$ being the batch-size of the batch-means estimator, \cite{vats2022lugsail} show that
\[
\text{Bias} \left( \hat{\Sigma}_{\rm BM} \right) \approx -\dfrac{\Xi}{b_n}\,.
\]
Consider matrix $B = -\Xi/\Sigma$ (where the division is element-wise) and let $B_{ij}$ denote the $(i,j)^\text{th}$ component of $B$. For positive Markov chains, the diagonals of $B$ are negative \citep{vats2022lugsail}. It is then fair to say that the relative bias of the batch-means correlation estimator will be better when
\[
|B_{ij}| > \dfrac{B_{ii} + B_{ij}}{2}\,.
\]
A theoretical justification explaining the conditions under which the above relationship holds is challenging to show, and we leave this as an open problem. However, for the VAR example, we note that the true $\Sigma$ is known. This allows us to study the behavior of the batch-means correlation and covariance matrix estimators. To compare the quality of estimation of $r_{ij}$ versus $\sigma_{ij}$, we run a two-dimensional VAR process $\{X_t\}$ with $\Phi$ being a symmetric matrix and $\rho > 0 $ such that 
\begin{equation*}
    X_t = \rho \Phi X_{t-1} + \epsilon_t; \ \epsilon_t \overset{\rm iid}{\sim} {\rm Normal}(0, I_2).
\end{equation*}
For the process to exhibit a stationary and limiting distribution, $\Phi$ must have a spectral radius less than 1, and a higher spectral radius corresponds to slower mixing in the VAR process \citep{dai2017multivariate}. In order to obtain such a $\Phi$, we generate a $2 \times 2$ matrix $A$ using iid draws from $N(0,1)$ and set $B = A A^{\top}$. Let $\eta$ denote the largest eigenvalue of $B$ and set $\Phi  = B/(\eta + 0.001)$. This ensures that the spectral radius of $\rho\Phi \approx \rho$. We generate two such $\Phi$ for two different seeds.

For varying $\rho$ and over $1000$ replications, we plot the estimates of the absolute relative deviation of the $(1, 2)^{{\text{th}}}$ element of the batch-means correlation matrix ($R$) and the covariance matrix ($\Sigma$) in Figures~\ref{fig:COV_CORR_ABS_BIAS_1} and \ref{fig:COV_CORR_ABS_BIAS_2}. The absolute relative deviation is defined as 
\begin{align*}
    \text{Covariance Absolute Relative Deviation} & = \frac{\mathbb{E}\left\vert\widehat{\sigma}_{12} - \sigma_{12}\right\vert}{ \sigma_{12} }  \\
    \text{Correlation Absolute Relative Deviation} & = \frac{\mathbb{E}\left\vert\widehat{r}_{12} - r_{12}\right\vert}{ r_{12} } \,.
\end{align*}
Similarly, the relative variances at the $(1, 2)^{\text{th}}$ position are
\begin{align*}
        \text{Covariance Relative Variance} & = \Var\left( \frac{\hat{\sigma}_{12}}{\sigma_{12}} \right)\\
    \text{Correlation Relative Variance} & = \Var\left( \frac{\hat{r}_{12}}{r_{12}} \right)\,.
\end{align*}
The plots show that for all values of $r_{12}$, the correlation estimator exhibits a smaller absolute relative deviation and also a smaller relative variance. Further, we see an interesting trend that for larger magnitudes in $r_{12}$ (connected also to higher $\rho$ values), the absolute relative deviation and the relative variance for the correlation estimator are decreasing, while the estimation quality for the covariance batch-means estimator gets worse. A theoretical exploration of the properties of the correlation batch-means estimator will make for interesting future work. 
\begin{figure}[ht]
    \centering
    \begin{minipage}{0.49\linewidth}
        \centering
        \includegraphics[width=\linewidth]{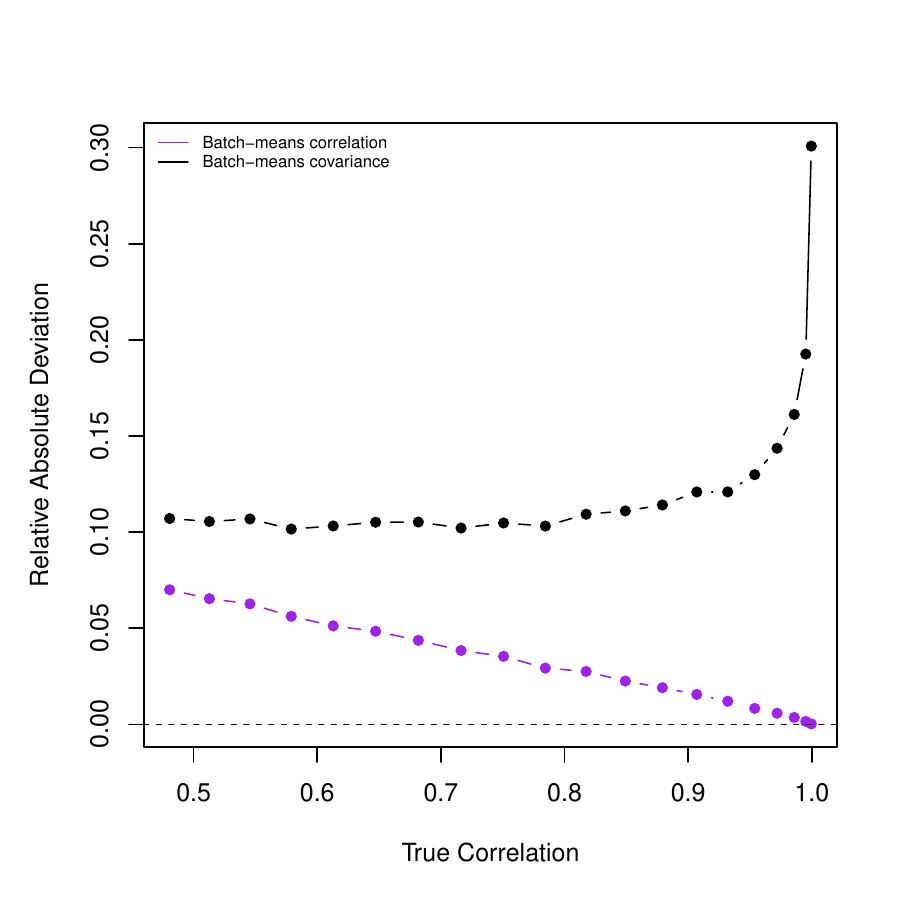}
    \end{minipage}
    \hfill
    \begin{minipage}{0.49\linewidth}
        \centering
        \includegraphics[width=\linewidth]{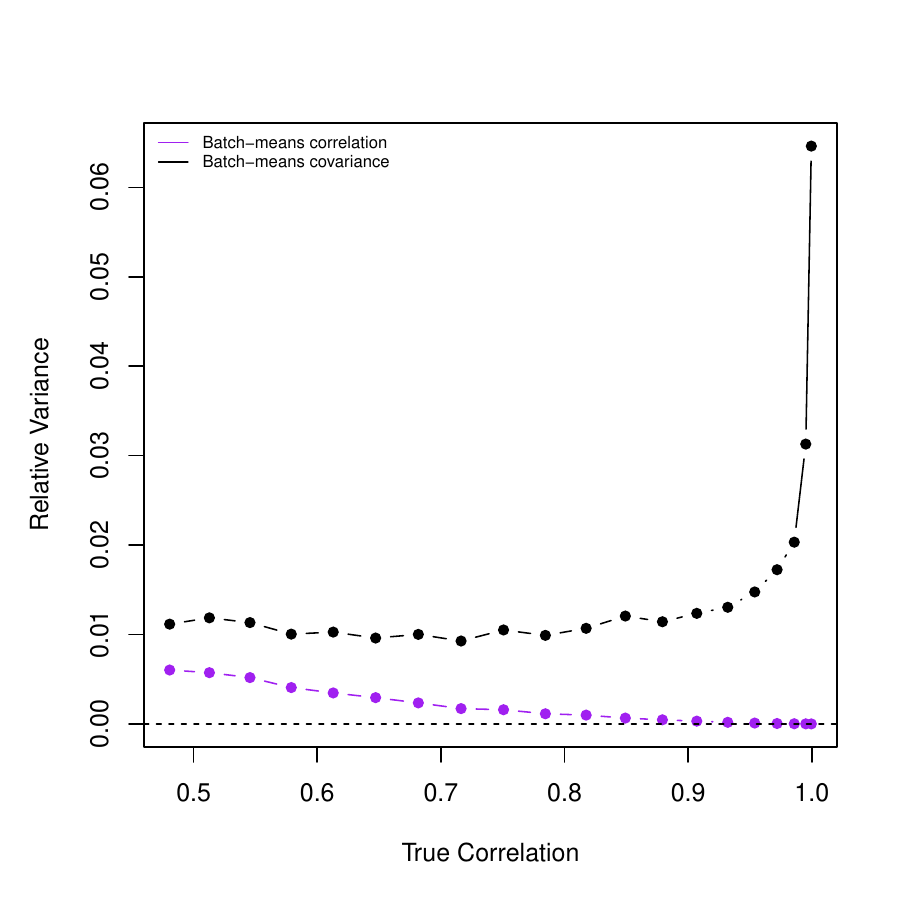}
    \end{minipage}
    \caption{Absolute relative deviation for batch-means correlation and batch-mean covariance for the first seed.}
    \label{fig:COV_CORR_ABS_BIAS_1}
\end{figure}
\begin{figure}[ht]
    \centering
    \begin{minipage}{0.49\linewidth}
        \centering
        \includegraphics[width=\linewidth]{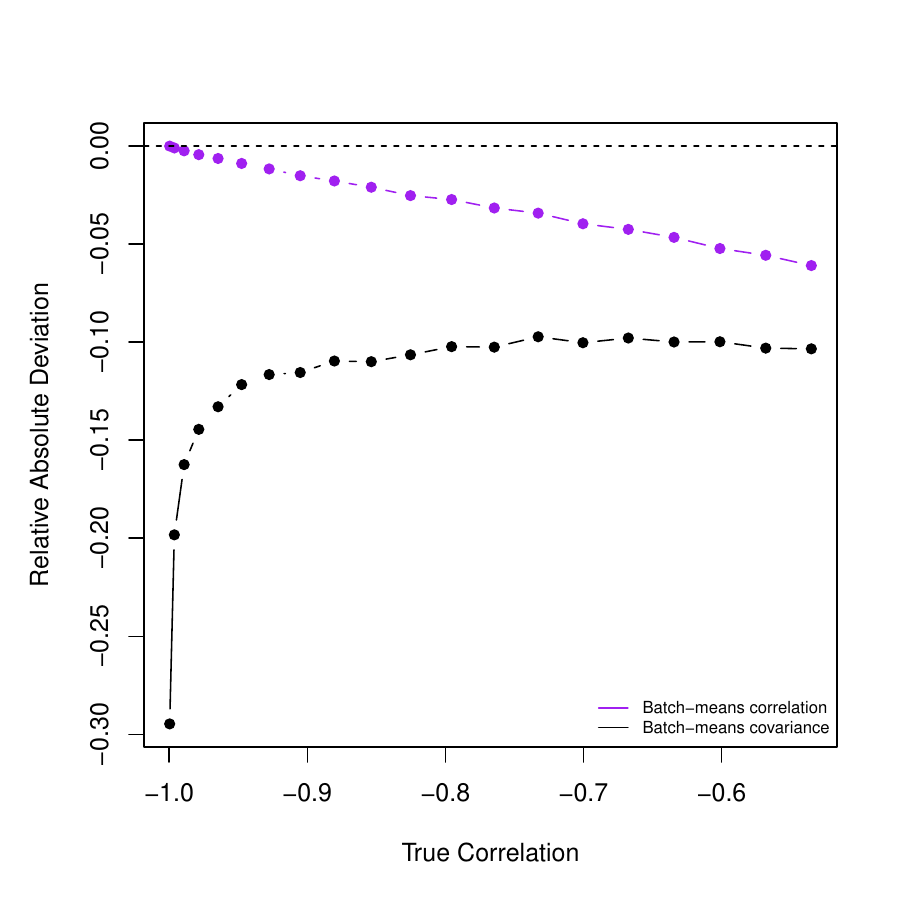}
    \end{minipage}
    \hfill
    \begin{minipage}{0.49\linewidth}
        \centering
        \includegraphics[width=\linewidth]{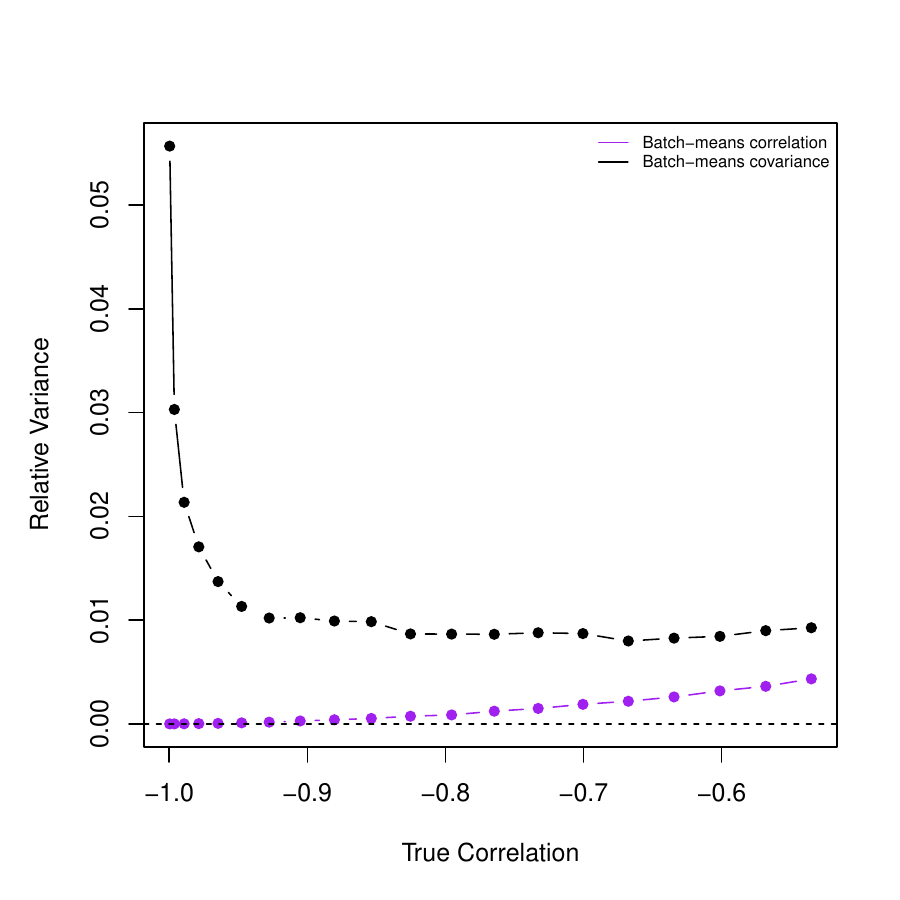}
    \end{minipage}
    \caption{Absolute relative deviation for batch-means correlation and batch-mean covariance for the second seed.}
    \label{fig:COV_CORR_ABS_BIAS_2}
\end{figure}

\section{Theoretical results}
\subsection{Proof of Proposition~\ref{thm:mise_tn}}
\label{pf:mise_tn}

\begin{theorem}{\cite[][Proposition~$2$]{dai2017multivariate}}
    Let $\Sigma_{m} = -\zeta_{0} + 2 \sum_{i=0}^{m} Z_{i}$ where $Z_{m} = \zeta_{2m} + \zeta_{2m + 1}$ for all $m \ge 0$. Then there exists an integer $m_{0}$ such that
    \begin{itemize}
        \item[1.] $\Sigma_{m}$ is positive definite for all $m \ge m_{0}$ and not positive definite for all $m < m_{0}$; and
        \item[2.] the sequence $\{\vert \Sigma_{m} \vert : m \ge m_{0}\}$ is positive, increasing, and converges to $\vert \Sigma \vert$.
    \end{itemize}
\end{theorem}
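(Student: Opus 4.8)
The plan is to read this as a purely deterministic statement about the sequence of population partial sums $\Sigma_m$, whose behavior is governed by two facts: the positive-definiteness of each $Z_i$ (an already-established property of reversible chains that I may take as given) and the convergence of $\Sigma_m$ to a genuinely positive-definite limit. First I would pin down that limit. Under reversibility each $\zeta_i$ is symmetric with $\zeta_{-i} = \zeta_i$, so $\Sigma = \sum_{i=-\infty}^{\infty}\zeta_i = \zeta_0 + 2\sum_{i\ge 1}\zeta_i$. Reindexing the $Z$-sum, $\sum_{i=0}^{\infty} Z_i = \sum_{i=0}^{\infty}(\zeta_{2i}+\zeta_{2i+1}) = \sum_{j\ge 0}\zeta_j$, so that $-\zeta_0 + 2\sum_{i=0}^{\infty} Z_i = \zeta_0 + 2\sum_{j\ge 1}\zeta_j = \Sigma$. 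This manipulation is legitimate because the CLT with finite $\Sigma$ forces absolute summability of $\{\zeta_i\}$, and it yields $\Sigma_m \to \Sigma$ as $m\to\infty$.

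The structural heart of the argument is the observation that $\Sigma_m - \Sigma_{m-1} = 2Z_m \succ 0$, so $\{\Sigma_m\}$ is strictly increasing in the Loewner order; moreover the tail $\Sigma - \Sigma_m = 2\sum_{i>m} Z_i \succeq 2Z_{m+1} \succ 0$ shows $\Sigma_m \prec \Sigma$ for every $m$. For part~1, I would then track $\lambda_{\min}(\Sigma_m)$: Loewner monotonicity gives that this is a nondecreasing sequence, and $\Sigma_m \to \Sigma$ together with continuity of eigenvalues gives $\lambda_{\min}(\Sigma_m) \nearrow \lambda_{\min}(\Sigma) > 0$. Defining $m_0 := \min\{m : \Sigma_m \succ 0\}$ (nonempty by this limit), monotonicity yields $\Sigma_m \succeq \Sigma_{m_0} \succ 0$ for all $m \ge m_0$, while $\lambda_{\min}(\Sigma_m) \le 0$ for $m < m_0$, so $\Sigma_m$ fails to be positive definite below the threshold. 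This establishes the clean threshold structure of part~1.

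For part~2, positivity of $\det \Sigma_m$ for $m \ge m_0$ is immediate from $\Sigma_m \succ 0$. Strict monotonicity follows by writing $\Sigma_{m+1} = \Sigma_m + 2Z_{m+1}$ with both summands positive definite and using $\det(A+C) = \det(A)\prod_k (1+\mu_k) > \det(A)$, where the $\mu_k > 0$ are the eigenvalues of $A^{-1/2} C A^{-1/2}$; hence $\det \Sigma_{m+1} > \det \Sigma_m$. Finally, entrywise convergence $\Sigma_m \to \Sigma$ and continuity of the determinant give $\det \Sigma_m \to \det \Sigma = |\Sigma|$, completing part~2.

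The main obstacle is not the linear algebra, which reduces to standard facts about the Loewner order, eigenvalue monotonicity, and the determinant inequality on the positive-definite cone. Rather, the care points are the two analytic inputs feeding into the limit identity: the absolute summability of the autocovariances that legitimizes rearranging $\sum \zeta_i$, and the reversibility-driven symmetry $\zeta_i = \zeta_i^{\top}$ that makes $-\zeta_0 + 2\sum_i Z_i$ collapse exactly onto $\Sigma$. The truly deep ingredient, $Z_i \succ 0$ for all $i$, rests on the spectral representation of the reversible Markov operator, but since it is already available I would invoke it directly and spend the effort instead on cleanly justifying convergence and the threshold definition of $m_0$.
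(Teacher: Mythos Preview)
The paper does not supply its own proof of this statement: it is quoted as Proposition~2 of \cite{dai2017multivariate} and invoked as a black box in the proof of Proposition~\ref{thm:mise_tn}, so there is no in-paper argument to compare your proposal against. That said, your argument is correct and is the natural one: Loewner monotonicity of $\{\Sigma_m\}$ from $Z_i\succ 0$, convergence $\Sigma_m\to\Sigma$ together with continuity of $\lambda_{\min}$ to extract the threshold $m_0$, and the strict determinant inequality $\det(A+C)>\det A$ for $A,C\succ 0$ to handle part~2. One small point: you justify the identity $\sum_{j\ge0}\zeta_j=\sum_{i\ge0}(\zeta_{2i}+\zeta_{2i+1})$ by appealing to absolute summability of $\{\zeta_i\}$, but this is merely a grouping of consecutive terms of a convergent series, which is always valid; the CLT hypothesis already delivers convergence of $\sum_i\zeta_i$, so the stronger claim of absolute summability (which the CLT alone does not guarantee) is neither needed nor established.
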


\begin{proof}[Proof of Proposition~\ref{thm:mise_tn}]
    Define $\hat{\Delta}_{i;n} = \vert \hat{\Sigma}_{i;n} \vert - \vert \hat{\Sigma}_{i-1;n} \vert$ and $\Delta_{i} = \vert \Sigma_{i} \vert - \vert \Sigma_{i-1} \vert$ for all $i \ge 1$. From ) in \citet[][Equation (B5)]{dai2017multivariate}, for all $i$, $\hat{\Delta}_{i;n} \rightarrow \Delta_{i}$ as $n \rightarrow \infty$ with probability $1$. Define the set of events $A_{t}$ for all $t \ge 0$ such that 
    \begin{equation*}
        A_{t} := \left\{\lim_{n \rightarrow \infty} \hat{\Delta}_{t;n} = \Delta_{t} \right\}.
    \end{equation*}
    From \cite{dai2017multivariate}, 
    \begin{equation*}
        \Pr(A_{t}^{c}) = 0 \numberthis \label{eq: A_t}\,,
    \end{equation*}
    for all $t \ge 0$. Define event $B = \{s_{n} \leq M \text{ for some } M \in \mathbb{N}\}$. Then by \citet[][Theorem~$2$]{dai2017multivariate},
    \begin{equation}
    \label{eq: B}
        \Pr(B) = 1 \Rightarrow \Pr(B^{c}) = 0
    \end{equation}
    By the Boole's inequality 
    \begin{align*}
        &\Pr\left((\cup_{t = s_{n} + 1}^{\infty}A_{t}^{c}) \cup B^{c}\right) \leq  \sum_{t=s_{n} + 1}^{\infty} \Pr(A_{t}^{c}) + \Pr(B^{c})\\
        \Rightarrow &\Pr\left((\cap_{t = s_{n} + 1}^{\infty}A_{t} \cap B)^{c}\right) \leq  \sum_{t = s_{n} + 1}^{\infty} \Pr(A_{t}^{c}) + \Pr(B^{c})\\
        \Rightarrow & 1 - \Pr(\cap_{t = s_{n} + 1}^{\infty}A_{t} \cap B) \leq 0 \ (\text{by } \eqref{eq: A_t} \text{ and } \eqref{eq: B})\\
        \Rightarrow & \Pr(\cap_{t = s_{n} + 1}^{\infty}A_{t} \cap B) = 1\\
        \Rightarrow & \Pr\left(\lim _{n \rightarrow \infty} \hat{\Delta}_{i;n} = \Delta_{i} \quad \text{ for all } i \ge s_{n} + 1\right) = 1. \numberthis \label{eq:}
    \end{align*}
    Now
    \begin{align*}
        & \Pr(t_{n} \rightarrow \infty \text{ as }  n \rightarrow \infty) \\
        = & \Pr\left(\vert \hat{\Sigma}_{i;n} \vert > \vert \hat{\Sigma}_{i-1;n} \vert \text{ as } n \rightarrow \infty \text{ for all } i \ge s_{n} + 1 \right) \\
        = & \Pr\left(\vert \hat{\Sigma}_{i;n} \vert - \vert \hat{\Sigma}_{i-1;n} \vert > 0 \text{ as } n \rightarrow \infty \text{ for all } i \ge s_{n} + 1 \right) \\
        \ge & \Pr\left(\hat{\Delta}_{i;n} \rightarrow \Delta_{i} \text{ as } n \rightarrow \infty \text{ for all } i \ge s_{n} + 1\right) \\
        = & 1.
    \end{align*}
\end{proof}

\subsection{Proof of Theorem~\ref{thm:Gen_var}}
\begin{proof}
\label{pf:Gen_var}
We show each part separately.
\begin{itemize}
\item[1.] Note that
    \begin{align*}
        \vert \hat{\Sigma}_{\text{cc}} \vert & = \vert \hat{L}_{\text{ISE}} \vert \vert \hat{L}^{\top}_{\text{ISE}} \vert \vert \hat{\mathcal{R}}_{\text{BM}} \vert \\
        & = \left(\prod_{i=1}^{d} \left(\hat{\sigma}_{\text{ISE}}^{(i)}\right)^{2}\right) \vert \hat{\mathcal{R}}_{\text{BM}} \vert.
    \end{align*}
     Under the Assumption~\ref{assm:1} and \ref{assm:2} and from \cite{vats2019multivariate} Theorem~$2$, $\hat{\Sigma}_{\text{BM}} \rightarrow \Sigma$ as  $n \rightarrow \infty$ with probability $1$. Also define $\hat{a}_{ij; \text{BM}}$ to be the $(i, j)^{th}$ elements of $\hat{\Sigma}_{\text{BM}}$. Consequently, $\hat{a}_{ij; \text{BM}} \rightarrow a_{ij}$ as $n \rightarrow \infty$ with probability $1$. For $\hat{a}_{ii; \text{BM}} > 0$ for all $i$ and by continuous mapping theorem, with probability 1
     \begin{align*}
         \frac{\hat{a}_{ij; \text{BM}}}{\sqrt{\hat{a}_{ii; \text{BM}} \ \hat{a}_{jj; \text{BM}}}} & \rightarrow \frac{a_{ij}}{\sqrt{a_{ii} \  a_{jj}}} \ \text{as } n \rightarrow \infty \\
         \Rightarrow \hat{r}_{ij} & \rightarrow r_{ij} \ \text{as } n \rightarrow \infty
     \end{align*}
     where $\hat{r}_{ij}$ and $r_{ij}$ are $(i, j)^{th}$ elements of $\hat{\mathcal{R}}_{\text{BM}}$ and $\mathcal{R}$ respectively. Hence as $n \rightarrow \infty$
     \begin{equation}
         \hat{\mathcal{R}}_{\text{BM}} \rightarrow \mathcal{R} \text{ with probability } 1. \label{eq: corr_cov}
     \end{equation}
     By \citet[][Theorem~$3$]{vats2018strong}, the eigenvalues of $\hat{\mathcal{R}}_{\text{BM}}$ are strongly consistent to eigenvalues of $\mathcal{R}$. This implies as  $n \rightarrow \infty$ with probability $1$
     \begin{equation}
         \vert \hat{\mathcal{R}}_{\text{BM}} \vert \rightarrow \vert \mathcal{R} \vert. \label{eq:det_corr}
     \end{equation}
     Additionally, from \citet[][Theorem~$3$]{geyer1992practical},  as  $n \rightarrow \infty$ with probability $1$ for $i = 1, 2, \cdots, d$
     \begin{equation}
         \liminf_{n \rightarrow \infty} \left(\hat{\sigma}_{\text{ISE}}^{(i)}\right)^{2} \ge \left(\sigma^{(i)}\right)^{2}.\label{eq:geyer_liminf}
     \end{equation}
     Hence, from \eqref{eq:det_corr} and \eqref{eq:geyer_liminf} with probability $1$
     \begin{align*}
        \liminf_{n \rightarrow \infty} \vert \hat{\Sigma}_{\text{cc}} \vert & = \left( \prod_{i=1}^{d} \liminf_{n \rightarrow \infty} \left(\hat{\sigma}_{\text{ISE}}^{(i)}\right)^{2}\right) \lim_{n \rightarrow \infty} \vert \hat{\mathcal{R}}_{\text{BM}}\vert \\
        & \ge \left( \prod_{i=1}^{d} \left(\sigma^{(i)}\right)^{2} \right) \vert \mathcal{R}\vert \\
        & = \vert L \mathcal{R} L \vert \\
        & = \vert \Sigma \vert.
     \end{align*}

\item[2.] From \eqref{eq: corr_cov}, as $n \rightarrow \infty$
\begin{align*}
    \hat{\mathcal{R}}_{\text{BM}} & \overset{\text{a.s.}}{\rightarrow} \mathcal{R}  \Rightarrow \hat{r}_{ij} \rightarrow r_{ij} \text{ for all } (i,j) \text{ with probability } 1.
\end{align*}
Hence with probability $1$
\begin{align*}
    \liminf_{n \rightarrow \infty} \vert \hat{a}_{ij} \vert & = \liminf_{n \rightarrow \infty} \left( \hat{\sigma}_{\text{ISE}}^{(i)} \vert \hat{r}_{ij} \vert \hat{\sigma}_{\text{ISE}}^{(j)}\right) \\
    & \ge \sigma^{(i)} \vert r_{ij} \vert \sigma^{(j)}\\
    &  = \vert a_{ij} \vert.
\end{align*}


\end{itemize}
\end{proof}

\subsection{Proof of Theorem~\ref{thm:gcise_asymp}}
\label{sec:app_gcise}
\begin{proof}
 \label{pf:global}
 First, we show that globally centered auto-covariance matrices, averaged over all the parallel chains are strongly consistent for the true auto-covariance matrix. Hence for all $t \ge 0$ and for any initial distribution, as $n \rightarrow \infty$
    \begin{align*}
         \hat{\gamma}_{t;n}^{\text{G}} & = \frac{1}{M} \sum_{s = 1}^{M} \hat{\gamma}^{(s)}_{t;n} \\
         & = \frac{1}{M} \sum_{s = 1}^{M} \frac{1}{n-t} \sum_{i = 1}^{n-t} \left(g(X_{t}^{(s)}) - \Bar{\Bar{g}}\right) \left(g(X_{t+i}^{(s)}) - \Bar{\Bar{g}}\right) \\
         & = \frac{1}{M} \sum_{s = 1}^{M} \frac{1}{n-t} \sum_{i = 1}^{n-t} \left[g(X_{t}^{(s)}) g(X_{t+i}^{(s)}) - g(X_{t}^{(s)}) \Bar{\Bar{g}} - \Bar{\Bar{g}} g(X_{t+i}^{(s)}) + \Bar{\Bar{g}}^{2} \right] \\
          & \overset{\text{a.s.}}{\rightarrow} \frac{1}{M} \sum_{s = 1}^{M} \left[\E_{\pi}(g(X_{1}) g(X_{1+t})) - \left(\E_{\pi}(g(X_{1})\right)^{2}\right] \\
         & \overset{\text{a.s.}}{\rightarrow} \E_{\pi}(g(X_{1}) g(X_{1+t})) - \left(\E_{\pi}(g(X_{1})\right)^{2} \\
         & \overset{\text{a.s.}}{\rightarrow} \gamma_{t}.
    \end{align*}
    Next, recall that the assumption of the existence of a univariate central limit theorem guarantees that there exists $\sigma^2 <\infty$ such that $\sigma^2 = -\gamma(0) + 2\sum_{t=0}^{\infty} \gamma_{t}$ \citep[see][]{jones2004markov}. Thus,  the series $\sum_{t=0}^{\infty} \gamma_{t}$ converges. Now the globally-centered univariate ISE is defined as
    \begin{equation*}
        \hat{\sigma}_{\text{G-ISE}}^{2} = - \hat{\gamma}_{0;n}^{\text{G}} + 2 \sum_{i = 0}^{k_{n}} (\hat{\gamma}_{2t;n}^{\text{G}} + \hat{\gamma}_{2t + 1;n}^{\text{G}})
    \end{equation*}
    where $k_{n}$ is the stopping time as defined in \eqref{eq:G-ISE}. Note that for every $\epsilon > 0$, there exists a stopping time $m_{\epsilon} > 0$ such that
    \begin{align*}
        2 \sum_{i = m_{\epsilon}+1}^{\infty} (\gamma_{2t} + \gamma_{2t + 1}) & < \epsilon/2 \\
        \Rightarrow - \gamma_{0} + 2 \sum_{i = 0}^{m_{\epsilon}} (\gamma_{2t} + \gamma_{2t + 1}) & > \sigma^{2} - \epsilon/2.\numberthis \label{eq:par_ineq}
    \end{align*}
    Again, for every $\epsilon > 0$, there exists a sample size $n_{\epsilon} > 0$ such that for all $n > n_{\epsilon}$, and $t \in \{0, 1, \ldots, m_{\epsilon}\}$, $(\hat{\gamma}_{2t;n}^{\text{G}} + \hat{\gamma}_{2t + 1;n}^{\text{G}}) > 0$ and
    \begin{align*}
        & - \gamma_{0} + 2 \sum_{i = 0}^{m_{\epsilon}} (\gamma_{2t} + \gamma_{2t + 1}) - \left( - \hat{\gamma}_{0;n}^{\text{G}} + 2 \sum_{i = 0}^{m_{\epsilon}} (\hat{\gamma}_{2t;n}^{\text{G}} + \hat{\gamma}_{2t + 1;n}^{\text{G}})\right) < \epsilon/2 \\
        \Rightarrow & - \hat{\gamma}_{0;n}^{\text{G}} + 2 \sum_{i = 0}^{m_{\epsilon}} (\hat{\gamma}_{2t;n}^{\text{G}} + \hat{\gamma}_{2t + 1;n}^{\text{G}}) > - \gamma_{0} + 2 \sum_{i = 0}^{m_{\epsilon}} (\gamma_{2t} + \gamma_{2t + 1}) - \epsilon/2\\
        \Rightarrow & - \hat{\gamma}_{0;n}^{\text{G}} + 2 \sum_{i = 0}^{m_{\epsilon}} (\hat{\gamma}_{2t;n}^{\text{G}} + \hat{\gamma}_{2t + 1;n}^{\text{G}})  > \sigma^{2} - \epsilon \text{  (by \eqref{eq:par_ineq})}\\
        \Rightarrow & \inf_{n > n_{\epsilon}} \left( - \hat{\gamma}_{0;n}^{\text{G}} + 2 \sum_{i = 0}^{m_{\epsilon}} (\hat{\gamma}_{2t;n}^{\text{G}} + \hat{\gamma}_{2t + 1;n}^{\text{G}})\right) > \sigma^{2} - \epsilon.
    \end{align*}
    Note that as $\epsilon \rightarrow 0$, $n_{\epsilon} \rightarrow \infty$ and the stopping time $m_{\epsilon} \rightarrow \infty$. Also by definition, for all $n > n_{\epsilon}$, and $t \in \{0, 1, \ldots, m_{\epsilon}\}$, $(\hat{\gamma}_{2t;n}^{\text{G}} + \hat{\gamma}_{2t + 1;n}^{\text{G}}) > 0$ and it implies that $k_{n} \ge m_{\epsilon}$. Hence
    \begin{align*}
        & \liminf_{n \rightarrow \infty} \left( - \hat{\gamma}_{0;n}^{\text{G}} + 2 \sum_{i = 0}^{m_{\epsilon}} (\hat{\gamma}_{2t;n}^{\text{G}} + \hat{\gamma}_{2t + 1;n}^{\text{G}})\right) \ge \sigma^{2}\\
        & \liminf_{n \rightarrow \infty} \left( - \hat{\gamma}_{0;n}^{\text{G}} + 2 \sum_{i = 0}^{k_{n}} (\hat{\gamma}_{2t;n}^{\text{G}} + \hat{\gamma}_{2t + 1;n}^{\text{G}})\right) \ge \sigma^{2} \text{   (since } k_{n} \ge m_{\epsilon})\\
        \Rightarrow & \liminf_{n \rightarrow \infty} \hat{\sigma}^{2}_{\text{G-ISE}} \ge \sigma^{2} \numberthis \label{eq:uni_par_ise}.
    \end{align*}
\end{proof}
\subsection{Proof of Corollary~\ref{cor:global}}
\begin{proof}
\label{pf:global_multi}
    Let $\{X_{t}^{(s)}\}_{t\ge 1}$ be a set of parallel Markov chains for $s = 1, 2, \ldots, M$. By Theorem~\ref{thm:gcise_asymp} as $n \rightarrow \infty$ and for all $j = 1, 2, \ldots, M$
    \begin{equation*}
        \liminf_{n \rightarrow \infty} \left(\hat{\sigma}_{\text{G-ISE}}^{(j)}\right)^{2} \ge \left(\sigma^{(j)}\right)^{2}.
    \end{equation*}
Hence
\begin{equation*}
    \liminf_{n \rightarrow \infty} \text{det}\left( \hat{L}_{\text{G-ISE}} \right) = \prod_{j=1}^{d} \hat{\sigma}_{\text{G-ISE}}^{(j)} \ge \text{det}\left( L \right). \numberthis \label{eq:det_L_par}
\end{equation*}
  Also, under Assumptions~\ref{assm:1} and \ref{assm:2}, by \citet[][Theorem~$2$]{gupta2020estimating}, 
    \begin{align*}
        \lim_{n \rightarrow \infty} \hat{\Sigma}_{\text{G-BM}} & = \Sigma \text{   with probability   } 1  \\
        \Rightarrow \lim_{n \rightarrow \infty} \hat{\mathcal{R}}_{\text{G-BM}} & = \mathcal{R} \text{   with probability   } 1 .
    \end{align*}
Hence, by \citet[][Theorem~$3$]{vats2018strong}, the eigenvalues of $\hat{\mathcal{R}}_{\text{G-BM}}$ are strongly consistent to the eigenvalues of $\mathcal{R}$. This implies as  $n \rightarrow \infty$
    \begin{equation*}
        \lim_{n \rightarrow \infty} \text{det}\left( \hat{\mathcal{R}}_{\text{G-BM}} \right) = \text{det}\left( \mathcal{R} \right) \text{   with probability   } 1 \numberthis \label{eq:R_bm_par}.
    \end{equation*}
By \eqref{eq:det_L_par} and \eqref{eq:R_bm_par}
\begin{equation*}
    \liminf_{n \rightarrow \infty} \text{det}\left( \hat{\Sigma}_{\text{G-cc}} \right) = \liminf_{n \rightarrow \infty} \text{det}\left( \hat{L}_{\text{G-ISE}} \hat{\mathcal{R}}_{\text{G-BM}} \hat{L}_{\text{G-ISE}} \right) \ge \text{det}\left( L \mathcal{R} L \right) = \text{det}\left( \Sigma \right).
\end{equation*}

\end{proof}

\begin{center}
{\large\bf SUPPLEMENTARY MATERIAL}
\end{center}

\begin{description}

\item[Title:] R-codes (zip file)

\item[R-codes for numerical implementations:] This zip file contains \texttt{R}-codes to implement the Covariance correlation estimator as shown in Section~\ref{sec:examples} and \ref{sec:parallel_ise} in the article. The codes are alternatively available in a public access GitHub repository at\\ \href{https://github.com/Arkagit/Efficient-Initial-Sequence-estimator}{https://github.com/Arkagit/Efficient-Initial-Sequence-estimator}.
\end{description}

\bibliographystyle{apalike}

\bibliography{efficientISE_arxiv}
\end{document}